\documentclass[12pt]{article}

\usepackage[utf8]{inputenc}

\usepackage{amsmath,amssymb,amsthm,mathtools,amstext}

\usepackage{graphicx}
\usepackage{subfig}
\usepackage{subcaption}
\usepackage{tikz}
\usetikzlibrary{shapes.geometric, arrows}
\usepackage{pgfplots}
\pgfplotsset{compat=1.17}
\usepackage{pgf-pie}
\usepackage{svg}
\newcommand{\Pro}{\mathbb{P}}
\usepackage[ruled,vlined]{algorithm2e}
\usepackage{algcompatible}

\usepackage{xcolor}
\usepackage{titlesec}
\newtheorem{claim}{Claim}

\usepackage[round,authoryear]{natbib}
\usepackage{verbatim}
\usepackage{chapterbib}
\usepackage{hyperref}
\usepackage{cancel}
\usepackage{multirow}
\usepackage{multicol}
\usepackage{eurosym}
\usepackage{academicons}
\usepackage{comment}
\usepackage{bbm}

\newtheorem{proposition}{Proposition}[section]

\newtheorem{theorem}{Theorem}
\newtheorem{algo}{Algorithm}

\newtheorem{lemma}[proposition]{Lemma}
\theoremstyle{definition}
\newtheorem{definition}[proposition]{Definition}
\newtheorem{remark}[proposition]{Remark}

\begin{document}

\title{\LARGE Testing for Single-Population Ancestry in the Admixture Model}

\newcommand{\orcid}[1]{\href{https://orcid.org/#1}{\includesvg[width=10pt]{orcid}}}

\author{Holger Dette$^{(1)}$, Carola Sophia Heinzel$^{(2), (*)},$ \\
Zoe Lange$^{(1)}$, Peter Pfaffelhuber$^{(2)}$\\
(1) Department of Mathematics, Ruhr University Bochum, \\
Bochum, Germany\\
        (2)  Department of Mathematical Stochastics, University\\
        of Freiburg,  Freiburg, Germany\\
        $(*)$ Corresponding author: carola.heinzel@stochastik.uni-freiburg.de}
\date{\today}
\maketitle


\begin{abstract}
The Admixture Model describes genetic marker data by representing each individual's genome as a mixture of contributions from $K$ ancestral populations, with the individual admixture vector summarizing the corresponding ancestry proportions. In population and forensic genetics, a key question is whether an individual's genome supports a predominantly single-ancestry interpretation or whether an admixed interpretation is more appropriate. We propose a statistical test for single-population ancestry in the supervised Admixture Model, where ancestral allele frequencies are treated as known. The test assesses whether the largest admixture component exceeds a practitioner-chosen dominance threshold, giving precise meaning to the notion of a sufficiently strong single-population contribution.

To calibrate the test, we develop a constrained parametric bootstrap procedure that generates data under a null-constrained maximum likelihood estimator, accounting for the constrained hypothesis structure, the marker-wise heterogeneity and small sample sizes. Under standard regularity conditions, we prove that the proposed test has asymptotic level $\alpha$ and is consistent, ensuring control of false single-ancestry declarations while reliably detecting dominant ancestry components.

Simulation studies demonstrate good finite-sample performance across different numbers of ancestral populations, marker-panel sizes, dominance thresholds, and allele-frequency distributions. We further illustrate the practical utility of the method using data from the 1000 Genomes Project.
The proposed framework delivers interpretable, threshold-based ancestry assessment with rigorous error control, and extends constrained bootstrap methodology to the independent but non-identically distributed setting of genetic marker data.
\end{abstract}

\textit{
Keywords: Admixture Model, Ancestry Analysis, Constrained Bootstrap, Statistical Test}

\section{Introduction}

Genetic ancestry analysis aims to describe how an individual's genome relates to a set of ancestral or reference populations. A widely used framework for this task is the Admixture Model, in which an individual's genotype is represented as a mixture of contributions from $K$ ancestral populations, summarized by an individual admixture vector $q = (q_1,\ldots,q_K)$. This model underlies widely used methods such as \textsc{Structure} \citep{pritchard2000} and \textsc{Admixture} \citep{alexander2009fast}, with applications spanning population structure, biogeographical ancestry, and forensic genetics \citep{rb2002, phillips2015, pritchard2001, tvedebrink2022, wasser2007}.

In many applications, however, estimating an ancestry vector is only one part of the problem. A practitioner may also need to decide whether the observed genetic data support a simpler interpretation, for instance whether an individual can be regarded as originating predominantly from a single ancestral population. This question is particularly relevant in forensic applications, where hard classification methods and modern machine-learning approaches are commonly used for ancestry classification \citep{wen2023forensic, phillips2009ancestry, phillips2007inferring, HEINZEL2025103290, maurer2025enhancing}. The practical challenge is therefore to connect the continuous output of an admixture model with a formally justified, error-controlled decision about whether a single-population interpretation is warranted.

Existing theoretical work on the Admixture Model has focused largely on estimation, in particular on consistency and asymptotic normality of maximum likelihood estimators for individual admixtures \citep{pfaff2004information, pfaffelhuber2022central, heinzel2025consistency}. These results characterize the uncertainty of ancestry estimates, but do not provide a formal testing procedure for threshold-based classification. In this paper, we address this inferential gap by developing a statistical test for single-population ancestry in the supervised Admixture Model.

We test whether the largest individual admixture component exceeds a user-specified dominance threshold that gives precise meaning to the phrase ''predominantly from one population''. Rather than testing each component separately, the procedure tests directly whether at least one population contributes sufficiently strongly, avoiding a component-wise multiple testing problem. Furthermore, our hypothesis test is formulated in terms of practically relevant dominance of an ancestry component rather than as an exact equality hypothesis requiring one component to be equal to 1. This yields more interpretable conclusions, as it allows for limited secondary ancestry while directly addressing whether a single component is sufficiently dominant for the application at hand.

The choice of the dominance threshold is necessarily application-dependent. For example, different forensic or population-genetic contexts may require different levels of ancestry dominance before a single-population interpretation is considered appropriate. We therefore complement the test at a fixed threshold by a sensitivity analysis over a small grid of plausible threshold values. This analysis makes explicit how strongly the conclusion depends on this choice and how strong the evidence for single-population ancestry is.

Because the hypotheses are constrained and the genetic markers have marker-specific allele-frequency profiles, making them not identically distributed, the problem is statistically non-standard. We address this via a constrained parametric bootstrap (\cite{DetteMoellenhoff}) adapted to the likelihood structure of the Admixture Model, and prove that the resulting test is asymptotically valid under suitable regularity conditions.

The finite-sample behavior of the procedure is investigated through simulation studies varying the number of ancestral populations, the marker-panel size, the dominance threshold, and the allele-frequency distribution. We further illustrate the method on data from the 1000 Genomes Project \citep{10002015global}.

The paper is organized as follows. Section~\ref{sec:testdefinition} introduces the supervised Admixture Model. Section~\ref{Section:HypothesisTest} formulates the hypothesis test for single-population ancestry, and Section~\ref{sec:algo} describes the constrained parametric bootstrap algorithm. Section~\ref{sec:theory} establishes asymptotic level control and consistency. Section~\ref{sec:simulations} presents the simulation study, and Section~\ref{sec:application} applies the method to the 1000 Genomes Project data.

\section{The Admixture Model}\label{sec:testdefinition}
We assume that there are $M$ markers in linkage equilibrium allowing us to assume independent occurrences of alleles at the marker loci. At each marker $m \in \{1,\dots,M\}$, $I_m$ different alleles can possibly occur. We consider $K \geq 2$ ancestral populations, whose genome is assumed to be non-admixed. For each population $k = 1, \dots, K$, allele $i$ has a frequency of $p_{kim}$ at marker $m$. 
For a fixed marker $m$ and a fixed population $k$ the allele frequencies sum up to $1$, that is $\sum_i p_{kim} = 1$. Thus, the collection of ancestral allele frequencies at marker $m$ is given by $(p_{\cdot \cdot m}) \in (\mathcal{S}^{I_m})^K$, where $\mathcal{S}^{I_m}$ is the $(I_m -1)$-dimensional simplex. We work in a supervised setting, where we assume the ancestral population frequencies to be fixed and known. 

The gene sample of the donor is given by $X = (X_1, \dots, X_M)$ with $X_m = (X_{im})_{i = 1,\dots,I_m}$, where $X_{im}$ is the number of copies of allele $i$ at marker $m$. The donor is assumed to belong to a diploid species, resulting in the specification that $X_{im} \in \{0,1,2\}$ with $\sum_{i} X_{im} = 2$.

We assume that the gene sample can be described by a mixture of the non-admixed genomes of the $K$ ancestral populations. In this paper, we work with a reduced $(K-1)$-dimensional parametrization of the individual admixture.
That is the individual admixture is given by $q = (q_1, \dots, q_{K-1})$, where $q_k$, $k=1,\dots,K-1$ is the probability that an allele is inherited from ancestral population $k$. Furthermore, the probability that an allele originates from ancestral population $K$ is $1 - \sum_{k=1}^{K-1} q_k$. The vector collecting the probabilities for all $K$ ancestral populations is defined as
\begin{equation*}
    q^\prime = \left(q_1, \dots, q_{K-1}, 1 - \sum_{k=1}^{K-1} q_k \right) \in \mathbb{S}^K.
\end{equation*}
For notational convenience, we introduce the notation $a^\prime = (a_1,\dots,a_{K-1}, 1- \sum_{k=1}^{K-1} a_k)$ for every vector $a = (a_1, \dots, a_{K-1}) \in \mathbb{R}^{K-1}$.
Hence, the probability of observing allele $i$ at marker $m$ given the individual admixture $q$ is $\langle q^\prime, p_{kim} \rangle$.

To establish asymptotic theory later on, we restrict the parameter space of the individual admixture to an interior subset bounded away from the boundary of the simplex.
\begin{enumerate}
    \item[$(A1)$] The parameter space is given by
    \begin{equation*}
        \Theta = \Bigg\{(q_1,\dots,q_{K-1}) \in [\varepsilon_q,1-\varepsilon_q]^{K-1} \Bigg| \varepsilon_q \leq \sum_{k=1}^{K-1} q_k \leq 1-\varepsilon_q \Bigg\}
    \end{equation*}
    for a fixed $0 < \varepsilon_q < \frac{1}{K}$.
\end{enumerate}

All specifications above lead to the following definition of the admixture model.

\begin{definition}[Admixture Model]
Let $K \geq 2$, $M \in \mathbb{N}$ and $I_m$ for all $m = 1, \dots, M$ be fixed. Let $q \in \Theta$ and $p = (p_{\cdot \cdot m})_{m = 1, \dots, M} \in \prod_{m=1}^M (\mathbb{S}^{I_m})^{K}$ be given. It holds that
\begin{equation*}
    X_{m} = (X_{1 m}, ..., X_{I_m m}) \sim \text{Multi}(2, \langle q^\prime, p_{\cdot i m} \rangle_{i=1,...,{I_m}}).
\end{equation*}
That is the vector of numbers of alleles at marker $m$ is multinomially distributed with $2$ trials and probabilities $\langle q^\prime, p_{\cdot 1 m} \rangle, \dots, \langle q^\prime, p_{\cdot I_m m} \rangle$. The family $(X_{m})_{m=1,...,M}$ is independent.
Hence, the log-likelihood function with realization $X = (X_{im})_{i = 1, \dots, I_m, m=1,\dots,M}$ scaled by $M$ is given by
\begin{align}
\begin{split}\label{Eq:LogLikelihoodFunction}
    \ell(q|X) &= \frac{1}{M} \sum_{m=1}^M \phi_m(X_{m},q)
    \\ &= \frac{1}{M} \sum_{m=1}^M \left( \log{\left(\frac{2}{X_{1 m}! \dots X_{I_m m}!}\right)} + \sum_{i=1}^{I_m} X_{i m} \log( \langle q^\prime, p_{\cdot i m} \rangle) \right),
    \end{split}
  \end{align}
\end{definition}
where $\phi_m$ denotes the single marker log-likelihood function for the $m$-th marker.
The MLE of the individual admixture $q$ is defined by
\begin{align} \label{Eq:DefinitionMLE}
         \hat q &= \text{argmax}_{q \in \Theta}\left\{q\mapsto \ell(q|X)\right\}.
\end{align}

\section{Testing Single-Population Ancestry}

Firstly, we discuss the hypotheses to test for single-population ancestry in Section \ref{Section:HypothesisTest} and describe the respective algorithm for this test in Section \ref{sec:algo}. Furthermore, Section \ref{sec:algo} addresses an adapted testing procedure to test for admixed ancestry instead of single-population ancestry and gives guidance on finding interpretable thresholds for the hypotheses in a data-driven way. In Section \ref{sec:theory}, we prove that the single-population ancestry test asymptotically holds the nominal level and is consistent. 

\subsection{The Hypothesis Test} \label{Section:HypothesisTest}
We investigate the individual admixture $q = (q_1, \dots, q_{K-1}) \in \Theta$ with associated full $K$-dimensional admixture vector $q^\prime \in \mathbb{S}^K$ of an individual with the gene sample $X = (X_1,\dots,X_M)$, where $X_{m} = (X_{1m}, \dots, X_{Im})$. We define the dominant component of the individual admixture as
\begin{equation*}
    d_\infty := \max_{k = 1,\ldots,K} q^\prime_k = \max \left\{ \max_{k = 1,\ldots,K-1} \{ q_k \}, 1- \sum_{k=1}^{K-1} q_k \right\}
\end{equation*}
and propose to test the hypotheses
\begin{align}\label{test:swapped}
H_0: d_\infty \leq \varepsilon_\infty
\quad \text{vs.} \quad
H_1:  d_\infty > \varepsilon_\infty
\end{align}
for a prespecified dominance threshold $\varepsilon_\infty \in (\frac{1}{2},1-(K-1)\varepsilon_q)$.

If the null hypothesis is fulfilled, the dominant component of the individual admixture is small and we conclude that the gene sample cannot be explained sufficiently well by the origin of the individual from a single ancestry population. If, on the other hand, the alternative hypothesis is fulfilled, the dominant component is large and we conclude that the individual exhibits a dominant single-population ancestry
at this threshold. The ancestry population corresponding to the index $k$ at which the maximum is obtained can be interpreted as the dominant ancestry population.

Here, the dominance threshold $\varepsilon_\infty$ should not be understood as an attempt to infer the individual's actual pedigree. Instead, it provides an interpretable benchmark of single population ancestry in terms of the expected allele contribution under an idealized pedigree model with non-admixed ancestors. For instance, $\varepsilon_\infty = 0.75$ corresponds to the expected contribution of three grandparents from the same ancestral population, and $\varepsilon_\infty = 0.875$ corresponds to the expected contribution of seven great-grandparents from the same ancestral population.

Note that a dominance threshold smaller or equal to $\frac{1}{2}$ does not make practical sense. For $\varepsilon_\infty \leq \frac{1}{2}$ it is no evidence for single population ancestry if the individual admixture fulfills the alternative hypothesis.

Furthermore, note that the restriction to the parameter space $\Theta$ in assumption $(A1)$ is a standard approach in maximum likelihood asymptotics since it ensures a regular interior regime in which the linearization of the MLE defined in \eqref{Eq:DefinitionMLE} holds uniformly and non-standard limiting distributions of the MLE are ruled out. More information on such non-standard asymptotics can be found in \cite{heinzel2025consistency}.

The hypotheses in \eqref{test:swapped} are formulated in terms of a dominance threshold $\varepsilon_\infty$ and do not consider exact purity, meaning $q^\prime_{0,k} = 1$ for one $k = 1,\dots,K$. Thus, by choosing $\varepsilon_q$ sufficiently small, the parameter space still contains all admixture vectors relevant for testing whether one population contributes at least $\varepsilon_\infty$ of the genome. Cases of exact pure ancestry lie on the boundary and are therefore outside the scope of the asymptotic theory, but they can be approximated arbitrarily well within $\Theta$.

To test the suggested hypotheses in \eqref{test:swapped} we propose a testing procedure in the following Section \ref{sec:algo}.
 
\subsection{The Algorithm} \label{sec:algo}

We describe the algorithm for the constrained parametric bootstrap test  to test the hypotheses in \eqref{test:swapped}. The implementation of the test can be found on our \href{https://github.com/CarolaHeinzel/StatisticalTest_ConstrainedBasedBootstrap}{GitHub website}.

\begin{algo}[Constrained Parametric Bootstrap Test for Single Population Ancestry]\label{alg} \:
\newline
  \begin{itemize}
        \item[1.] Based on the sample
        \begin{equation*}
        	X = (X_1, \dots, X_M),
        \end{equation*}
        we calculate the MLE $\hat{q}$ defined in \eqref{Eq:DefinitionMLE} of the true admixture $q_0$. Furthermore, calculate the estimator of the test statistic given by 
        \begin{equation*}
            \hat{d}_\infty = \max_{k=1, ..., K} \hat{q}^\prime_k.
        \end{equation*}
        
        \item[2.]  Define the following null-constrained estimator for the admixture.
        \begin{equation} \label{Eq:ConstrainedMLE}
                        \hat{\hat{q}} =
\begin{cases}
                \hat{q} & \text{if } \hat{d}_\infty \leq \varepsilon_\infty,\\
                \Tilde{q} & \text{if } \hat{d}_\infty >\varepsilon_\infty,
            \end{cases}
        \end{equation}
        where $\Tilde{q}$ denotes the constrained MLE 
        \begin{equation*}
            \tilde q := \textup{argmax}_{q \in \Theta}\left\{q \mapsto \ell(q|X) :\max_{k=1, ..., K} q^\prime_k = \varepsilon_\infty\right\}.
        \end{equation*}
        \item[3.1] Generate bootstrap data
        \begin{align*}
        	X^{*}_{1}, \dots, X^{*}_{M}, \quad X^{*}_m = \left(X^{*}_{1, m}, ..., X^{*}_{I_m,m} \right)\sim \textup{Multi}(2, (\langle \hat{\hat q}^\prime, p_{\cdot i m}\rangle)_{i=1,\dots,I_m} )
        \end{align*}
        under the null hypothesis for an individual with admixture $\hat{\hat{q}}$.
        \item[3.2] Analogous to \eqref{Eq:DefinitionMLE}, calculate the MLE $\hat{q}^{*}$ based on the bootstrap data generated in Step 3.1. Then, calculate the bootstrap statistic
        \begin{align*}
            \hat{d}^*_\infty = \max_{k=1,...,K} \hat{q}^{*\prime}_k.
        \end{align*} 
        \item[3.3] Repeat the steps $3.1$ and $3.2$ $B$ times to generate $B$ replicates of the test statistic, and denote by 
        \begin{equation*}
            \hat{d}_{\infty,(1)}^{*} \leq \dots \leq \hat{d}_{\infty,(B)}^{*}
        \end{equation*}
        the order statistic of the bootstrap sample $\hat{d}^{*}_{\infty,1}, \dots, \hat{d}^{*}_{\infty,B}$.
        \item[4.] Calculate the empirical $(1-\alpha)$-quantile
        \begin{equation} \label{Eq:EmpiricalQuantileBootstrapDistribution}
            \hat{\xi}_{1-\alpha}^{B} := \hat{d}_{\infty,(\lfloor (1-\alpha) B \rfloor)}^{*}
        \end{equation}
        of the bootstrap sample.
        \item[5.] Reject the null hypothesis in \eqref{test:swapped} if
        \begin{equation*}
            \hat{d}_\infty > \hat{\xi}_{1-\alpha}^{B}.
        \end{equation*}
    \end{itemize}
    
\end{algo}

\begin{remark}[Testing for Admixed Ancestry]
    The suggested testing procedure in Algorithm \ref{alg} for the hypotheses in \eqref{test:swapped} asymptotically controls the type $I$ error of falsely concluding single population ancestry (see Theorem \ref{Th:BootstrapAsymptoticNormality}). If the interest of the practitioner is to control the error of falsely concluding an admixed ancestry with at least $2$ ancestry populations, swapped hypotheses can be tested. That is 

    \begin{align} \label{HypothesisAdmixtureOneComponent}
        H_0^s: d_\infty  \geq \varepsilon_\infty^s
        \quad \text{vs.} \quad
        H_1^s: d_\infty < \varepsilon_\infty^s,
    \end{align}
    for a prespecified threshold $\varepsilon_\infty^s > 0$.
    A pdf-file with a detailed description of the testing procedure for the hypotheses in \eqref{HypothesisAdmixtureOneComponent}, as well as the implementation of the code can be found on our \href{https://github.com/CarolaHeinzel/StatisticalTest_ConstrainedBasedBootstrap}{GitHub website}.
\end{remark}

\begin{remark}[Sequential Testing]
    In the hypothesis testing procedure presented in this paper the choice of $\varepsilon_\infty$ in \eqref{test:swapped} is an important step. This choice has to be justified carefully in each application separately. However, there might be applications where specifying the threshold a priori is difficult. In such cases, we propose a sequential testing approach that determines an interpretable threshold from the data.
    \begin{itemize}
        \item[$1.$] Choose a grid of thresholds $\mathcal{E} = \{\varepsilon_{\infty,1}, \dots, \varepsilon_{\infty,N}\}$, starting with $\varepsilon_{\infty,1}$ close to $1$ and decreasing the thresholds, that is $\varepsilon_{\infty,n} \geq \varepsilon_{\infty,n+1}$.
        \item[$2.$] The empirical $(1-\alpha)$-quantile $\hat{\xi}_{1-\alpha}^B$ defined in \eqref{Eq:EmpiricalQuantileBootstrapDistribution} depends on the choice of $\varepsilon_{\infty}$. Calculate this quantile for each threshold in $\mathcal{E}$ by following the steps $1.$ to $4.$ in Algorithm $\ref{alg}$ and define them by $\hat{\xi}_{1-\alpha}^B(\varepsilon_{\infty,n})$ for $n = 1, \dots, N$.
        
        \item[3.] Define
        \begin{equation} \label{Eq:CriticalValuesSequentialTesting}
            \bar{\xi}_{1-\alpha}(\varepsilon_{\infty,n}) := \max_{k \geq n} \hat{\xi}_{1-\alpha}^B(\varepsilon_{\infty,k})
        \end{equation}
        and note that $\bar{\xi}_{1-\alpha}(\varepsilon_{\infty,n}) \geq \bar{\xi}_{1-\alpha}(\varepsilon_{\infty,n+1})$.
        \item[4.] Find the smallest $n$ for which $\hat{d}_\infty > \bar{\xi}_{1-\alpha}(\varepsilon_{\infty,n})$.  
    \end{itemize}

    The smallest such index $n$ is the transition point of the sequential procedure and rejection for $\varepsilon_{\infty,n}$ implies rejection for all smaller thresholds $\varepsilon_{\infty,k}$, $k \geq n$. Consequently, $\varepsilon_{\infty,n}$ is the largest grid threshold for which the adjusted test supports dominant single-population ancestry, and all smaller thresholds are supported as well. In this sense, the procedure returns the highest dominance threshold, among the prespecified grid values, for which the data provide sequentially adjusted evidence of single-population ancestry.
    
    Due to the monotonicity of the critical values defined in \eqref{Eq:CriticalValuesSequentialTesting}, it follows from the sequential rejection principle discussed in \cite{Goemann} and Theorem \ref{Th:BootstrapAsymptoticNormality} that the procedure described above is an asymptotic level $\alpha$ test.
    
\end{remark}


\subsection{Theoretical Results}\label{sec:theory}

In this section, we establish asymptotic correctness of Algorithm \ref{alg}. That is Theorem \ref{Th:BootstrapAsymptoticNormality} states that the constrained parametric bootstrap test described in Algorithm \ref{alg} asymptotically holds its level $\alpha$ and is consistent.

For notational convenience, we assume from now on that $I_m = I$ for all $m = 1,\dots,M$. That is we assume that the number of possible alleles is equal across all markers. We note that the theory presented here can easily be generalized for a varying number of alleles at the markers.

We make the following additional assumptions on the admixture model.
\begin{itemize}
    \item[$(A2)$] For every $k=1,\dots,K$, $i = 1,\dots, I$ and $m = 1,\dots,M$ we have that 
    \begin{equation*}
        p_{kim} \geq \varepsilon_p
    \end{equation*}
    for a fixed $0 < \varepsilon_p < \frac{1}{I}$.
    \item[$(A3)$] Let $\delta_{p}$ be the Dirac measure at the point $p \in (\mathbb{S}^{I})^{K \times M}$. The fixed design measures
    \begin{equation*}
        \mu_M = \frac{1}{M} \sum_{m=1}^M \delta_{p_{\cdot \cdot m}}
    \end{equation*}
    converge weakly to an asymptotic design measure $\mu$ on $(\mathbb{S}^I)^K$.

    \item[$(A4)$] For every $h \in \mathbb{R}^K$ with $h^T \mathbf{1} = 0$ and $h \neq 0$ we have that
    \begin{equation*}
        \mu \Big( \big\{ \mathbf{p} \big\vert \exists i \text{ s.t. } h^T \mathbf{p_i} > 0 \big\} \Big) > 0.
    \end{equation*}
\end{itemize}


\begin{remark}[Discussion of assumptions]
    Assumption $(A2)$ together with assumption $(A1)$ guarantees that the log-likelihood function in \eqref{Eq:LogLikelihoodFunction} and its derivatives are uniformly bounded in both, $q$ and $p$. From an applied point of view, this assumption can be understood as requiring that the considered marker set does not contain allele frequencies that are exactly $0$.

    Assumption $(A3)$ ensures that the empirical distribution of the marker-specific allele frequencies stabilizes as the number of markers increases and provides an asymptotic description of the marker distribution.
    
    Assumption $(A4)$ states that the asymptotic design $\mathbf{p}$ is sensitive to changes in all directions $h$ on the parameter space of the ancestry population allele frequencies.
    Note that assumption $(A4)$ implies for all $q^1,q^2 \in \Theta$ with $q^1 \neq q^2$ that
    \begin{equation} \label{Eq:KullbackLeibler}
        \int \text{KL}\Big( \text{Mult}(2,(\langle q^2, \mathbf{p_i} \rangle )_{i=1,\dots,I}) || \text{Mult}(2,(\langle q^1, \mathbf{p_i} \rangle)_{i=1,\dots,I}) \Big) \mu(d\mathbf{p}) > 0,
    \end{equation}
    where KL denotes the Kullback-Leibler divergence and
    \begin{equation*}
        \mathbf{p} = (\mathbf{p}_{k i})_{k=1,\dots,K, i = 1,\dots,I}, \: \mathbf{p_i} = (\mathbf{p_{ki}})_{k=1,\dots,K}. 
    \end{equation*} 
    To see this, define $h = q^{1 \prime} - q^{2 \prime}$. Then, $h^T \mathbbm{1} = 0$ and $h \neq 0$ and it follows from assumption $(A4)$ that 
\begin{equation*}
    h^T \mathbf{p_i} = \langle q^{1 \prime}, \mathbf{p_i} \rangle -\langle q^{2 \prime}, \mathbf{p_i} \rangle > 0
\end{equation*}
for at least one $i$ on a $\mu$-positive set. Thus, \eqref{Eq:KullbackLeibler} follows.
This shows that Assumption $(A4)$ guarantees that under the limiting marker design $\mathbf{p}$, two different individual admixtures induce different genotype distributions. That means that the marker design must contain enough ancestry-informative variation to distinguish different ancestry proportions. If the ancestral populations had identical allele frequencies across all considered markers, assumption $(A4)$ would fail and the test based on these markers could not distinguish individual admixtures.
\end{remark}

We start the considerations by proving auxiliary results for the proof of Theorem \ref{Th:BootstrapAsymptoticNormality}.

We introduce the following notation. The sigma algebra generated by the random variables $X_1,\dots,X_M$ is given by
\begin{equation*}
    \mathcal{F}_M := \sigma(X_1,\dots,X_M).
\end{equation*}
The respective conditional probability,  expectation and variance given $\mathcal{F}_M$ are denoted by
\begin{equation*}
    \mathbb{P}(\cdot|\mathcal{F}_M), \: \mathbb{E}[\cdot|\mathcal{F}_M], \: \text{Var}(\cdot|\mathcal{F}_M).
\end{equation*}

The first auxiliary result states that the null-constrained MLE is consistent under the null hypothesis.
\begin{lemma} \label{Th:ConsistencyConstrainedMLE}
Let $\hat{\hat{q}}$ be the null-constrained MLE defined in \eqref{Eq:ConstrainedMLE} and assume that $d_\infty \leq \varepsilon_\infty$. If assumptions $(A1) -  (A4)$ hold it follows that
\begin{align*}
    \hat{\hat{q}} \overset{\mathbb{P}}{\longrightarrow} q_0.
\end{align*}
\end{lemma}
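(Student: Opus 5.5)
The plan is to first show that the unconstrained MLE $\hat q$ is consistent for $q_0$. I will then transfer this to $\hat{\hat q}$ by treating the cases $d_\infty<\varepsilon_\infty$ and $d_\infty=\varepsilon_\infty$ separately.

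\textbf{Step 1: consistency of $\hat q$.} I would use the standard Wald/M-estimation argument for independent, non-identically distributed data. Put $L_M(q)=\frac1M\sum_m \mathbb{E}_{q_0}[\phi_m(X_m,q)]$. By (A1)–(A2), each $\phi_m(\cdot,q)$ and its gradient in $q$ are bounded uniformly in $m$, $q\in\Theta$ and $X_m$. Chebyshev's inequality together with independence therefore gives $\ell(q|X)-L_M(q)\to0$ in probability for each fixed $q$. The uniform Lipschitz bound then upgrades this to $\sup_{q\in\Theta}|\ell(q|X)-L_M(q)|\overset{\mathbb{P}}{\to}0$ via a finite net of the compact set $\Theta$. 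Next, $L_M(q)=\int g(q,\mathbf p)\,\mu_M(d\mathbf p)$ with
\[
g(q,\mathbf p)=\sum_i 2\langle q_0',\mathbf p_i\rangle\log\langle q',\mathbf p_i\rangle+\text{const}.
\]
The function $g$ is continuous and bounded on $\Theta\times(\mathbb S^I)^K$ restricted to $p\ge\varepsilon_p$. By (A3) and equicontinuity in $q$, $L_M\to L$ uniformly on $\Theta$, where $L(q)=\int g\,d\mu$. Finally, $L(q_0)-L(q)$ equals the integrated Kullback–Leibler divergence in \eqref{Eq:KullbackLeibler}, which is strictly positive for $q\neq q_0$ by (A4). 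Since $L$ is continuous on the compact set $\Theta$, $q_0$ is a well-separated maximizer, and the argmax theorem yields $\hat q\overset{\mathbb P}{\to}q_0$.

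\textbf{Step 2: case $d_\infty<\varepsilon_\infty$.} The map $q\mapsto\max_k q'_k$ is continuous, so $\hat d_\infty\overset{\mathbb P}{\to}d_\infty<\varepsilon_\infty$. Hence $\mathbb P(\hat{\hat q}=\hat q)\ge\mathbb P(\hat d_\infty\le\varepsilon_\infty)\to1$, and consistency of $\hat{\hat q}$ follows from Step 1.

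\textbf{Step 3: boundary case $d_\infty=\varepsilon_\infty$.} On the event $\hat d_\infty\le\varepsilon_\infty$ we again have $\hat{\hat q}=\hat q$. On the complementary event $\hat{\hat q}=\tilde q$, and I would argue directly that $\tilde q$ is consistent. The set $\Theta_0=\{q\in\Theta:\max_k q'_k=\varepsilon_\infty\}$ is compact. It is nonempty because $\varepsilon_\infty<1-(K-1)\varepsilon_q$, and it contains $q_0$. Since $\tilde q$ maximizes $\ell$ over $\Theta_0\ni q_0$, we get $\ell(\tilde q|X)\ge\ell(q_0|X)$. Combining this with the uniform convergence from Step 1 gives $L(\tilde q)\ge L(q_0)-o_{\mathbb P}(1)$, and well-separatedness then forces $\tilde q\overset{\mathbb P}{\to}q_0$. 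Since both candidates converge to $q_0$, so does $\hat{\hat q}$, by splitting $\mathbb P(\|\hat{\hat q}-q_0\|>\eta)$ over the two events. The same argument works for $\hat{\hat q}$ in general: it maximizes $\ell$ over a set containing $q_0$ with probability tending to one. In the strict case this needs $\hat d_\infty\le\varepsilon_\infty$, which holds with probability tending to one. So Steps 2 and 3 can be unified.

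I expect the main obstacle to be the uniform law of large numbers under the non-identically distributed design, together with the passage from $\mu_M$ to $\mu$. These require the uniform bounds from (A1)–(A2) and an equicontinuity argument so that weak convergence applies uniformly in $q$. The identifiability step via (A4) is comparatively direct given the remark preceding the lemma.
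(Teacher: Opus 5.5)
Your proposal is correct and follows the paper's proof essentially step for step. You establish consistency of $\hat q$ via a uniform law of large numbers plus integrated Kullback--Leibler identifiability from (A4), where the paper simply cites Theorem~1 of \cite{hoadley1971} instead of carrying out the uniform-convergence and argmax argument by hand; you then get consistency of $\tilde q$ because $\Theta_{\varepsilon_\infty}$ is compact and contains $q_0$, and use the same case split on $d_\infty<\varepsilon_\infty$ versus $d_\infty=\varepsilon_\infty$ (your observation that $\hat{\hat q}$ maximizes $\ell$ over a set containing $q_0$ with probability tending to one merges the two cases neatly but is the same argument).
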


\begin{proof}

Firstly, we start by establishing consistency of the unconstrained MLE $\hat{q}$. This follows by Theorem $1$ in \cite{hoadley1971} and the required compactness, uniform continuity, uniform boundedness and identifiability are verified below. Similar consistency results for MLEs in the Admixture Model are discussed in \cite{pfaffelhuber2022central, heinzel2025consistency}.

Note that $\Theta$ is a compact set since it is bounded and closed as the pre-image of the closed set $[\varepsilon_q,1-\varepsilon_q]$ under the continuous function
$(x_1,\dots,x_{K-1}) \mapsto \sum_{k=1}^{K-1} x_k$.

Furthermore, the log-likelihood functions defined in \eqref{Eq:LogLikelihoodFunction} are Lipschitz continuous on $\Theta$ uniformly in $x,m$,
because for all $q^1,q^2 \in \Theta$ and all $m,x$
\begin{equation} \label{Eq:LipschitzLogLikelihood}
    |\ell_m(q^1|x) - \ell_m(q^2|x)| \leq \sup_{q \in \Theta} \|\nabla \ell_m(q|x)\| \|q^1-q^2\|.
\end{equation}
We have that $\sum_{i} x_{im} = 2$, $|p_{kim}-p_{Kim}| \leq 1$ and it follows from assumption $(A2)$ that $\langle q^{\prime},p_{\cdot im} \rangle \geq \varepsilon_p$. Thus, we obtain  for all $m,x$
\begin{equation*}
    \sup_{q \in \Theta} \|\nabla \ell_m(q|x)\| = \sup_{q \in \Theta} \left(\sum_{k=1}^{K-1} \left( \sum_{i=1}^I x_{im} \frac{p_{kim}-p_{Kim}}{\langle q^{\prime},p_{\cdot im} \rangle}\right)^2 \right)^\frac{1}{2} \leq \frac{2\sqrt{K-1}}{\varepsilon_p}.
\end{equation*}
Thus, the functions $q \mapsto \ell_m(q|X)$ are continuous on $\Theta$ uniformly in $m$ almost surely. Furthermore, these functions are uniformly bounded in $q,x$ and $m$ by assumptions $(A1)$ and $(A2)$ since
\begin{equation} \label{Eq:UniformBoundedness}
    |\ell_m(q|x)| \leq \log(2) + 2|\log(\varepsilon_p)|.
\end{equation}

Lastly, we note that by assumption $(A3)$ and $(A4)$ we have
\begin{align*}
    \lim_{M \to \infty} & \mathbb{E}[\ell_M(q)-\ell_M(q_0)] \\ &= - \int \text{KL}\Big( \text{Mult}(2,(\langle q_0, \mathbf{p_i} \rangle)_{i=1,\dots,I}) || \text{Mult}(2,(\langle q, \mathbf{p_i} \rangle)_{i=1,\dots,I} \Big) \mu(d\mathbf{p})\\ & < 0
\end{align*}
for all $q \neq q_0$.
Thus, it is straightforward to see that the assumptions of Theorem $1$ in \cite{hoadley1971} are fulfilled which implies 
    \begin{equation} \label{Eq:BootstrapNormalityConsistencyMLE}
        \hat{q} \overset{\mathbb{P}}{\longrightarrow} q_0.
    \end{equation}

    To establish consistency of the     constrained MLE $\tilde{q}$, we define the parameter space
    \begin{equation*}
        \Theta_{\varepsilon_\infty} := \left\{q \in \Theta \Big| \max_{k=1,\dots,K} q^\prime_k = \varepsilon_\infty\right\}
    \end{equation*}
    and assume that $q_0 \in \Theta_{\varepsilon_\infty}$ Since $\Theta_{\varepsilon_\infty}$ is compact as well and since $\Theta_{\varepsilon_\infty} \subseteq \Theta$ the assumptions of Theorem $1$ in \cite{hoadley1971} are still fulfilled by the arguments above and it follows that
    \begin{equation} \label{Eq:BootstrapConsistencyConstrainedMLE}
        \tilde{q} \overset{\mathbb{P}}{\longrightarrow} q_0.
    \end{equation}

    To show consistency of $\hat{\hat{q}}$, we firstly consider the case $d_\infty < \varepsilon_\infty$.
    We know that
    \begin{equation} \label{Eq:FuncionDInfty}
	d_\infty:
	\left\{
	\begin{alignedat}{2}
		\Theta &\to \mathbb{R}\\
		q &\mapsto \max_{k=1,\dots,K}{ q^\prime_k }
	\end{alignedat}
	\right.
    \end{equation}
    is continuous and by the continuous mapping theorem we get
    \begin{equation} \label{Eq:StatisticEstimatorConsistency}
        \hat{d}_\infty \overset{\mathbb{P}}{\longrightarrow} d_\infty.
    \end{equation}
    Now, we have for all $\eta > 0$
    	\begin{equation} \label{Eq:Claim1Case1}
    		\Pro\left(\|\hat{\hat{q}}-q_0\|> \eta\right) \leq \Pro\left(\hat{\hat{q}} \neq \hat{q}\right) + \Pro\left(\|\hat{q}-q_0\| > \frac{\eta}{2}\right).
    	\end{equation}
        The first term on the right side of \eqref{Eq:Claim1Case1} converges to $0$ for $M \to \infty$ because it holds by \eqref{Eq:StatisticEstimatorConsistency} that
        \begin{equation*} 
    		\lim_{M \to \infty} \Pro(\hat{\hat{q}} = \hat{q}) = \lim_{M \to \infty} \Pro(\hat{d}_\infty < \varepsilon_\infty) = 1.
    	\end{equation*}
        The second term on the right side of \eqref{Eq:Claim1Case1} converges to $0$ for $M \to \infty$ by consistency of the MLE established in \eqref{Eq:BootstrapNormalityConsistencyMLE}.
        
        Now, we consider the case $d_\infty = \varepsilon_\infty$. Note that for all $\eta > 0$
    	\begin{equation} \label{Eq:Claim1Case2}
    		\Pro\left(\|\hat{\hat{q}}-q_0\|> \eta\right) \leq \Pro\left(\|\hat{q}-q_0\| > \frac{\eta}{2}\right) + \Pro\left(\|\Tilde{q}-q_0\| > \frac{\eta}{2}\right)
    	\end{equation}
    	The first term on the right side of \eqref{Eq:Claim1Case2} converges to $0$ for $M \to \infty$ by \eqref{Eq:BootstrapNormalityConsistencyMLE} and the second term on the right side of \eqref{Eq:Claim1Case2} converges to $0$ for $M \to \infty$ by the consistency of the constrained MLE $\tilde{q}$ established in \eqref{Eq:BootstrapConsistencyConstrainedMLE}.
        
\end{proof}

The second auxiliary result states that the bootstrap MLE is asymptotically consistent conditionally on the data in probability for an arbitrary data generating parameter.

\begin{lemma} \label{Th:BootstrapConsistency}
Assume that $(A1)-(A4)$ hold. Let $\bar{q}$ be any $\mathcal{F}_M$-measurable sequence with values in $\Theta$ and generate bootstrap data from the admixture model with parameter $\bar{q}$. Let $\hat{q}^*$ be the corresponding bootstrap MLE. Then, it holds for all $\eta > 0$ that
    \begin{equation*}
        \mathbb{P}(\|\hat{q}^* - \bar{q} \| > \eta | \mathcal{F}_M) \overset{\mathbb{P}}{\longrightarrow} 0.
    \end{equation*}
    That is the bootstrap MLE is consistent conditionally on $\mathcal{F}_M$ in probability.
\end{lemma}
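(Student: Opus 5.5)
The plan is to prove a \emph{uniform} (in the data-generating parameter) version of the Wald-type consistency argument used for $\hat q$ in Lemma \ref{Th:ConsistencyConstrainedMLE}, and then plug in $\bar q$. Conditionally on $\mathcal{F}_M$, the bootstrap data $X^*_1,\dots,X^*_M$ are independent multinomials with parameter $\bar q$, where $\bar q$ is fixed given $\mathcal{F}_M$. Hence it suffices to show
\begin{equation*}
    \sup_{q_1 \in \Theta} \mathbb{P}_{q_1}\left(\|\hat q - q_1\| > \eta\right) \longrightarrow 0 \quad (M \to \infty),
\end{equation*}
where $\mathbb{P}_{q_1}$ is the law of $X$ under the admixture model with parameter $q_1$. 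Indeed, $\mathbb{P}(\|\hat q^*-\bar q\|>\eta|\mathcal{F}_M)$ is then bounded by this deterministic supremum, so it tends to $0$ surely and, in particular, in probability.

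\textbf{Step 1 (uniform law of large numbers).} I will show that
\begin{equation*}
    \sup_{q_1\in\Theta}\mathbb{E}_{q_1}\Big[\sup_{q\in\Theta}|\ell(q|X)-\mathbb{E}_{q_1}\ell(q|X)|\Big]\to 0.
\end{equation*}
The summands $\phi_m$ are independent, bounded by \eqref{Eq:UniformBoundedness}, and Lipschitz in $q$ with constant $2\sqrt{K-1}/\varepsilon_p$, uniformly in $m$ and $x$, by \eqref{Eq:LipschitzLogLikelihood}. Cover $\Theta$ by a finite $\delta$-net $q^{(1)},\dots,q^{(J)}$. At each net point, Chebyshev gives variance at most $C/M$, uniformly in $q_1$, since the bound is uniform. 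The Lipschitz bound then controls the gap between net points by $4\sqrt{K-1}\,\delta/\varepsilon_p$. Letting $M\to\infty$ and then $\delta\to 0$ gives the claim.

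\textbf{Step 2 (uniform identifiability).} Set
\begin{equation*}
    L_M(q,q_1) := \mathbb{E}_{q_1}[\ell(q|X)-\ell(q_1|X)] = -\int \mathrm{KL}\big(q_1\|q\big)\,\mu_M(d\mathbf p).
\end{equation*}
By (A1) and (A2), the integrand is a continuous function of $(q,q_1,\mathbf p)$ on a compact set, hence bounded and uniformly continuous. By (A3), $L_M\to L$ uniformly on $\Theta\times\Theta$, where $L$ is the same expression with $\mu$ in place of $\mu_M$; the family is equicontinuous, so pointwise weak convergence upgrades to uniform convergence. By \eqref{Eq:KullbackLeibler}, which follows from (A4), $L(q,q_1)<0$ for $q\neq q_1$. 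Since $L$ is continuous on the compact set $\{(q,q_1)\in\Theta\times\Theta:\|q-q_1\|\ge\eta\}$, it attains its maximum there, so
\begin{equation*}
    \sup_{\|q-q_1\|\ge\eta} L(q,q_1) \le -2c_\eta < 0.
\end{equation*}
For $M$ large the same bound holds for $L_M$ with $-c_\eta$.

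\textbf{Step 3 (conclusion).} On the event $\|\hat q-q_1\|\ge\eta$, the inequality $\ell(\hat q|X)\ge \ell(q_1|X)$ together with Step 2 forces
\begin{equation*}
    2\sup_q|\ell(q|X)-\mathbb{E}_{q_1}\ell(q|X)| \ge c_\eta.
\end{equation*}
By Markov's inequality and Step 1, the probability of this event tends to $0$ uniformly in $q_1$. One technical point: the conditional law of $\hat q^*$ must be the law of $\hat q$ evaluated at $q_1=\bar q$. This is immediate because the MLE is a measurable function of the data alone, via a measurable selection of the argmax, and the bootstrap draws are conditionally independent given $\mathcal{F}_M$.

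The main obstacle is Step 2. Uniformity in the true parameter is not provided by Hoadley's theorem as it was cited for Lemma \ref{Th:ConsistencyConstrainedMLE}, and it is exactly what is needed because $\bar q$ is random. The proof relies on joint continuity of the KL integrand on the compact set, which is guaranteed by (A1) and (A2), together with (A4). Without that compactness and continuity the separation constant $c_\eta$ could degenerate as $\bar q$ varies.
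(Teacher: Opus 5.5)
Your proposal is correct and follows essentially the same route as the paper: a law of large numbers for the bootstrap log-likelihood that is uniform over $\Theta\times\Theta$ (hence over the random $\bar q$), uniform convergence of the expected log-likelihood ratio $-\int \mathrm{KL}\,d\mu_M$ to its $\mu$-limit via equicontinuity and (A3), and a strictly negative maximum on the compact set $\{\|q^1-q^2\|\ge\eta\}$ via (A4) and the extreme value theorem. Reducing to the deterministic bound $\sup_{q_1\in\Theta}\mathbb{P}_{q_1}(\|\hat q-q_1\|>\eta)\to 0$ packages the same uniformity more cleanly, and it even gives the conditional convergence surely rather than only in probability.
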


\begin{proof}
    The idea of the proof is oriented towards the proof techniques used in \cite{hoadley1971}, but extends it to the bootstrap setting.
    Let $\eta > 0$ and define
    \begin{equation*}
        \Theta(\bar{q},\eta) = \Big\{ q \in \Theta \: | \: \|q-\bar{q}\| \geq \eta \Big\}.
    \end{equation*}
    It then suffices to show that
    \begin{equation} \label{Eq:BootstrapConsistenyFirstReformulation}
        \mathbb{P}(\hat{q}^* \in \Theta(\bar{q},\eta) | \mathcal{F}_M) \overset{\mathbb{P}}{\longrightarrow} 0.
    \end{equation}
    Additionally, we define the bootstrap log-likelihood ratio for $q^1,q^2 \in \Theta$ by
    \begin{align*}
        L_M^*(q^1,q^2) &= \frac{1}{M} \sum_{m=1}^M  \left( \ell_m(q^1|X_m^*) - \ell_m(q^2|X_m^*)  \right) \\
        &= \frac{1}{M} \sum_{m=1}^M \sum_{i=1}^I X_{im}^* \Big(\log(\langle q^{1 \prime},p_{\cdot im} \rangle) - \log(\langle q^{2 \prime},p_{\cdot im} \rangle \Big), 
    \end{align*}
    and 
    \begin{align*}
        R_M^*(\bar{q}) = \sup_{\|q-\bar{q}\| \geq \eta} \Big\{ L_M^*(q,\bar{q}) \Big\}.
    \end{align*}
    
    Recall that, given $\mathcal{F}_M$, $\hat{q}^*$ maximizes $L_M^*(q, \bar{q})$ in $q$ and that $L_M^*(\bar{q},\bar{q}) = 0$. Thus, if, given $\mathcal{F}_M$, it holds that $\hat{q}^* \in \Theta(\bar{q},\eta)$, then $R_M^*(\bar{q}) \geq 0$.
    So, if we show that
    \begin{equation} \label{Eq:BootstrapConsistencyRM}
        \mathbb{P}(R_M^*(\bar{q}) \geq 0|\mathcal{F}_M) \overset{\mathbb{P}}{\longrightarrow} 0,
    \end{equation}
    then statement \eqref{Eq:BootstrapConsistenyFirstReformulation} follows and the bootstrap consistency result is proved.
    To show \eqref{Eq:BootstrapConsistencyRM}, we note that
    \begin{equation*}
        R_M^*(\bar{q}) \leq \sup_{\|q-\bar{q}\| \geq \eta} \Big\{ \big\vert L_M^*(q,\bar{q}) - \mathbb{E}\big[ L_M^*(q,\bar{q}) \big\vert \mathcal{F}_M \big] \big\vert \Big\} + \sup_{\|q-\bar{q}\| \geq \eta} \Big\{ \mathbb{E}\big[ L_M^*(q,\bar{q}) \big\vert \mathcal{F}_M \big] \Big\}
    \end{equation*}
    and thus, if $R_M^*(\bar{q}) \geq 0$, it holds for all $\gamma > 0$ that
    \begin{equation*}
        \sup_{\|q-\bar{q}\| \geq \eta} \Big\{ \big\vert L_M^*(q,\bar{q}) - \mathbb{E}\big[ L_M^*(q,\bar{q}) \big\vert \mathcal{F}_M \big] \big\vert \Big\} \geq \gamma
    \end{equation*}
    or
    \begin{equation*}
        \sup_{\|q-\bar{q}\| \geq \eta} \Big\{ \mathbb{E}\big[ L_M^*(q,\bar{q}) \big\vert \mathcal{F}_M \big] \Big\} \geq - \gamma.
    \end{equation*}
    This yields for all $\gamma > 0$ that
    \begin{align}
    \begin{split} \label{Eq:BootstrapConsistencyTwoTerms}
        \mathbb{P}(R_M^*(\bar{q}) \geq 0|\mathcal{F}_M) \leq \mathbb{P}\Big( & \sup_{\|q-\bar{q}\| \geq \eta} \Big\{ \big\vert L_M^*(q,\bar{q}) - \mathbb{E}\big[ L_M^*(q,\bar{q}) \big\vert \mathcal{F}_M \big] \big\vert \Big\} \geq \gamma | \mathcal{F}_M \Big)\\ & + \mathbbm{1}\Big\{ \sup_{\|q-\bar{q}\| \geq \eta} \Big\{ \mathbb{E}\big[ L_M^*(q,\bar{q}) \big\vert \mathcal{F}_M \big] \Big\} \geq - \gamma \Big\}
    \end{split}
    \end{align}
    and we show for both terms on the right side that they converge to $0$ in probability.
    To prove the convergence for the first term on the right side of \eqref{Eq:BootstrapConsistencyTwoTerms}, we show more generally, that a conditional uniform weak law of large numbers (WLLN) holds. That is for every $\gamma > 0$
    \begin{equation} \label{Eq:BootstrapConsistencyFirstTerm}
        \mathbb{P}\Big( \sup_{q \in \Theta} \Big\{ \big\vert L_M^*(q,\bar{q}) - \mathbb{E}\big[ L_M^*(q,\bar{q})\big\vert \mathcal{F}_M \big] \big\vert \Big\} \geq \gamma \Big\vert \mathcal{F}_M \Big) \overset{\mathbb{P}}{\longrightarrow} 0.
    \end{equation}
    To show \eqref{Eq:BootstrapConsistencyFirstTerm}, note that by the same arguments as in \eqref{Eq:UniformBoundedness} we have for every $q^1,q^2 \in \Theta$ and every $m = 1,\dots, M$
    \begin{equation*}
        \Big\vert \sum_{i=1}^I X_{im}^* \big(\log(\langle q^{1 \prime}, p_{\cdot im} \rangle)- \log(\langle q^{2 \prime}, p_{\cdot im} \rangle)\big) \Big\vert \leq 4 |\log{\varepsilon_p}|
    \end{equation*}
    almost surely.
    It follows that
    \begin{equation*}
        \text{Var}\Big( L_M^*(q^1,q^2) - \mathbb{E}\big[ L_M^*(q^1,q^2)\big\vert \mathcal{F}_M \big] \big\vert \mathcal{F}_M \Big) \leq \frac{16|\log{\varepsilon_p}|^2}{M}
    \end{equation*}
    almost surely and by a conditional version of Chebychev's inequality, we get for every $q^1,q^2 \in \Theta$ that
    \begin{equation} \label{Eq:BootstrapConsistencyPointwiseWLLN}
        \mathbb{P}\Big( \big\vert L_M^*(q^1,q^2) - \mathbb{E}\big[ L_M^*(q^1,q^2)\big\vert \mathcal{F}_M \big] \big\vert \geq \gamma \Big\vert \mathcal{F}_M \Big) \leq \frac{16|\log{\varepsilon_p}|^2}{M\gamma^2} \longrightarrow 0
    \end{equation}
    almost surely.


    By \eqref{Eq:LipschitzLogLikelihood} the pointwise conditional WLLN in \eqref{Eq:BootstrapConsistencyPointwiseWLLN} is generalized to the uniform version over the compact set $\Theta \times \Theta$ by a standard $\varepsilon$-net argument. \eqref{Eq:BootstrapConsistencyFirstTerm} follows because $\bar{q} \in \Theta$.

    To prove the convergence in probability for the second term on the right side of \eqref{Eq:BootstrapConsistencyTwoTerms}, we define
    \begin{align*}
        r(q^1,q^2) := 2\int \sum_{i=1}^I \langle q^{2 \prime}, \mathbf{p_i} \rangle \big( \log(\langle q^{1 \prime},\mathbf{p_i} \rangle) - \log(\langle q^{2 \prime},\mathbf{p_i} \rangle) \big) d\mu(\mathbf{p})
    \end{align*}
    and show that
    \begin{equation} \label{Eq:BootstrapConsistencyFunctionSet}
        \sup_{q^1,q^2 \in \Theta} \Big\vert \frac{2}{M} \sum_{m=1}^M \sum_{i=1}^I \langle q^{2 \prime},p_{\cdot im} \rangle \Big(\log(\langle q^{1 \prime},p_{\cdot im} \rangle) - \log(\langle q^{2 \prime},p_{\cdot im} \rangle \Big) - r(q^1,q^2) \Big\vert \longrightarrow 0.
    \end{equation}
    To do so, we define the set of functions
    \begin{equation*}
        \mathcal{F} = \Big\{ (\mathbb{S}^I)^K_{\varepsilon_p} \to \mathbb{R}, \: p \mapsto \sum_{i=1}^I \langle q^{2 \prime},p_i \rangle \big( \log(\langle q^{1 \prime},p_i \rangle) - \log(\langle q^{2 \prime},p_i \rangle) \big) \:\Big\vert \: q^1,q^2 \in \Theta \Big\},
    \end{equation*}
    where $(\mathbb{S}^I)^K_{\varepsilon_p} = \{ (p_{ki})_{k, i} | \sum_{i=1}^I p_{ki} = 1, \: p_{k i} \in [\varepsilon_p,1-\varepsilon_p]  \}$. 
    By assumptions $(A1)$ and $(A2)$, $\mathcal{F}$ is bounded and the functions in $\mathcal{F}$ have a common Lipschitz constant, making $\mathcal{F}$ equicontinuous.
    Thus, by Theorem $1.12.1$ in \cite{Wellner}, the uniform convergence in \eqref{Eq:BootstrapConsistencyFunctionSet} holds.

    Now, we can write for every $\gamma > 0$
    \begin{align} 
    \begin{split} \label{Eq:BootstrapConsistencyExpectedLikelihood}
        &\mathbb{P} \Big(\sup_{\|q-\bar{q}\| \geq \eta} \Big\{ \mathbb{E}\big[ L_M^*(q,\bar{q}) \big\vert \mathcal{F}_M \big] \Big\} \geq - \gamma \Big)\\
        &\leq \mathbb{P} \Big( \sup_{\|q-\bar{q}\| \geq \eta} \Big\{ \Big\vert \mathbb{E}\big[ L_M^*(q,\bar{q}) \big\vert \mathcal{F}_M \big] - r(q, \bar{q}) \Big\vert \Big\} \geq \gamma \Big) + \mathbb{P} \Big( \sup_{\|q-\bar{q}\| \geq \eta} \Big\{ r(q,\bar{q}) \Big\} \geq - 2 \gamma \Big).
    \end{split}
    \end{align}
    The first term on the right side of \eqref{Eq:BootstrapConsistencyExpectedLikelihood} converges to $0$ by \eqref{Eq:BootstrapConsistencyFunctionSet}. To show convergence for the second term on the right side of \eqref{Eq:BootstrapConsistencyExpectedLikelihood},
    note that
    \begin{equation*}
        \sup_{\|q-\bar{q}\| \geq \eta } \{ r(q,\bar{q}) \} \leq \sup_{(q^1,q^2) \in \Theta \times \Theta, \|q^1-q^2\|\geq \eta } \{ r(q^1,q^2) \}.
    \end{equation*}
    Since $\{ \Theta \times \Theta \: | \: \|q^1-q^2\| \geq \eta \}$ is compact and $r$ is continuous by the Leibniz integral rule, the extreme value theorem yields that $r$ attains its maximum on this set denoted by $C(\eta)$. 
    Furthermore, it follows by assumption $(A4)$, that $r(q_1,q_2) < 0$ for all $q^1 \neq q^2$. Thus, we have that
    \begin{equation*}
        \sup_{\|q-\bar{q}\|\geq \eta} \{ r(q,\bar{q}) \} \leq C(\eta) < 0
    \end{equation*}
    and choosing $\gamma = -\frac{1}{4}C(\eta)$ yields that the second term on the right side of \eqref{Eq:BootstrapConsistencyExpectedLikelihood} is $0$.
    
\end{proof}

Now, we state and prove the main result of this article. Note that it is stated with respect to the true bootstrap quantile $\xi^*_{1-\alpha}$ as the estimation error of $\hat{\xi}^B_{1-\alpha}$ can be made arbitrarily small by increasing the number $B$ of bootstrap iterations. 

\begin{theorem} \label{Th:BootstrapAsymptoticNormality}
	Let assumptions $(A1) - (A4)$ hold. Let $\Theta^\circ$ be the interior of the set $\Theta$ from assumption $(A1)$ and assume $q_0 \in \Theta^\circ$.  
    
    
    \begin{itemize}
        \item[(a)] If the null hypothesis $H_0$ in \eqref{test:swapped} holds, it follows for any $\alpha \in (0,0.5)$ that
        \begin{equation*}
        \lim_{M \to \infty} \Pro(\hat{d}_\infty > \xi_{1-\alpha}^*) 
        \begin{cases}
            = 0 & \text{if } d_\infty < \varepsilon_\infty,\\
            = \alpha & \text{if } d_\infty = \varepsilon_\infty.
        \end{cases}
        \end{equation*}
        \item[(b)] If the alternative hypothesis $H_1$ in \eqref{test:swapped} holds, it follows for any $\alpha \in (0,0.5)$ that
        \begin{equation*}
            \lim_{M \to \infty} \Pro(\hat{d}_\infty > \xi_{1-\alpha}^*) = 1.
        \end{equation*}
    \end{itemize}
\end{theorem}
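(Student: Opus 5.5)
The plan is to reduce everything to asymptotic normality of the MLE and of the bootstrap MLE, and then pass these through the (directionally differentiable) map $d_\infty$ from \eqref{Eq:FuncionDInfty}.

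\textbf{Step 1: asymptotic normality of the MLE.} Since $q_0\in\Theta^\circ$ and $\hat q\to q_0$ in probability by Lemma~\ref{Th:ConsistencyConstrainedMLE}, the score equation $\nabla\ell(\hat q|X)=0$ holds with probability tending to one. A Taylor expansion gives
\begin{equation*}
\sqrt M(\hat q-q_0)=\mathcal I(q_0)^{-1}\frac{1}{\sqrt M}\sum_{m=1}^M\nabla\phi_m(X_m,q_0)+o_{\mathbb P}(1).
\end{equation*}
Here $\mathcal I(q)=\lim_M \frac1M\sum_m \mathbb E[-\nabla^2\phi_m]$, which exists by (A3), since the summands are bounded Lipschitz functions of $p_{\cdot\cdot m}$ by (A1)--(A2). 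The limit is positive definite by (A4). The Lindeberg CLT for bounded, independent, non-identically distributed summands then yields $\sqrt M(\hat q-q_0)\Rightarrow N(0,\mathcal I(q_0)^{-1})$.

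\textbf{Step 2: bootstrap analogue.} Repeat Step 1 conditionally on $\mathcal F_M$ for data generated from $\bar q=\hat{\hat q}$. Under $H_0$, Lemma~\ref{Th:ConsistencyConstrainedMLE} gives $\hat{\hat q}\to q_0\in\Theta^\circ$, and Lemma~\ref{Th:BootstrapConsistency} gives conditional consistency of $\hat q^*$. The Hessian remainder is controlled by the uniform Lipschitz bounds, uniformly over $q$ in a neighbourhood. Because the third moments of the summands are bounded, a conditional Lyapunov CLT (or Berry--Esseen) applies. The conditional Fisher information at $\hat{\hat q}$ converges to $\mathcal I(q_0)$ by continuity, so $\sqrt M(\hat q^*-\hat{\hat q})\Rightarrow N(0,\mathcal I(q_0)^{-1})$ conditionally, in probability.

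\textbf{Step 3: level under $H_0$.} Write $\hat d_\infty = d_\infty(\hat q)$ and $\hat d^*_\infty=d_\infty(\hat q^*)$.

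If $d_\infty<\varepsilon_\infty$, then with probability tending to one $\hat{\hat q}=\hat q$ and $\hat d_\infty=d_\infty+O_{\mathbb P}(M^{-1/2})$. The bootstrap quantile satisfies
\begin{equation*}
\xi^*_{1-\alpha}=d_\infty(\hat q)+M^{-1/2}c+o_{\mathbb P}(M^{-1/2}),
\end{equation*}
where $c$ is the $(1-\alpha)$-quantile of $d'_{q_0}(Z)$. Here $d'_{q_0}(h)=\max_{k\in\mathcal K}h'_k$ is the Hadamard directional derivative of the max over the active index set $\mathcal K$, and $Z\sim N(0,\mathcal I^{-1})$. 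So the question becomes whether $\hat d_\infty-d_\infty(\hat q)>M^{-1/2}c$, and this difference is identically zero. Since $\alpha<0.5$, $c>0$: indeed $\max_{k\in\mathcal K}Z'_k$ has median $\ge 0$, and the limit law is continuous. Therefore $\mathbb P(\hat d_\infty>\xi^*_{1-\alpha})\to 0$. I expect this is why the restriction $\alpha<0.5$ appears.

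If $d_\infty=\varepsilon_\infty$, then $\hat{\hat q}\in\{\hat q,\tilde q\}$ and in both cases $d_\infty(\hat{\hat q})\le\varepsilon_\infty$. A cleaner route uses $\xi^*_{1-\alpha}= d_\infty(\hat{\hat q})+M^{-1/2}c+o_{\mathbb P}(M^{-1/2})$, with the same $c$ because the active set is preserved near $q_0$. On $\{\hat d_\infty>\varepsilon_\infty\}$ we have $d_\infty(\tilde q)=\varepsilon_\infty$, so rejection is equivalent to $\sqrt M(\hat d_\infty-\varepsilon_\infty)>c+o_{\mathbb P}(1)$. By Step 1 and the delta method, $\sqrt M(\hat d_\infty-\varepsilon_\infty)\Rightarrow d'_{q_0}(Z)$. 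On $\{\hat d_\infty\le\varepsilon_\infty\}$ there is no rejection, because $c>0$. Hence the rejection probability tends to $\mathbb P(d'_{q_0}(Z)>c)=\alpha$.

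\textbf{Step 4: consistency under $H_1$.} Here $\hat{\hat q}=\tilde q$ with probability tending to one. It remains to show $\xi^*_{1-\alpha}\le\varepsilon_\infty+O_{\mathbb P}(M^{-1/2})$. This follows from Step 2 applied with $\bar q=\tilde q$, using a uniform-in-$\bar q$ version of the conditional CLT: $\tilde q$ lies in the compact set $\Theta_{\varepsilon_\infty}$, which may intersect the boundary of $\Theta$. At the boundary I would instead use the cruder bound $\hat d^*_\infty-d_\infty(\tilde q)=o_{\mathbb P}(1)$ from Lemma~\ref{Th:BootstrapConsistency}, which gives $\xi^*_{1-\alpha}\le\varepsilon_\infty+o_{\mathbb P}(1)$. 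Since $\hat d_\infty\to d_\infty>\varepsilon_\infty$, the rejection probability tends to one.

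\textbf{Main obstacle.} The hard part is the conditional bootstrap CLT in Step 2, where the data-generating parameter itself moves with $M$. This requires equicontinuity in $q$ of both the information matrix and the Lindeberg condition. I would handle it by proving Step 1 uniformly over $q_0$ in compact subsets of $\Theta^\circ$, and then substituting $\hat{\hat q}$.
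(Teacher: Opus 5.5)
Your overall route matches the paper's. You use a conditional bootstrap CLT at $\hat{\hat q}$ via Taylor expansion and a conditional Lyapunov CLT, and the delta method at the boundary, where the maximizing index is unique. For part (b) you use only the crude bound $\xi^*_{1-\alpha}\le\varepsilon_\infty+o_{\mathbb P}(1)$ from Lemma~\ref{Th:BootstrapConsistency} together with Lipschitz continuity of $d_\infty$. The boundary case $d_\infty=\varepsilon_\infty$ and part (b) are fine.

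The gap is in Step 3 for $d_\infty<\varepsilon_\infty$. Your expansion $\xi^*_{1-\alpha}=d_\infty(\hat q)+M^{-1/2}c+o_{\mathbb P}(M^{-1/2})$, with $c$ the $(1-\alpha)$-quantile of $d'_{q_0}(Z)$, is false whenever the active set $\mathcal K$ has two or more elements. That situation is allowed in the interior of $H_0$, for example $K=2$ and $q_0'=(1/2,1/2)$. There $d_\infty$ is only directionally differentiable at $q_0$, while the bootstrap delta method needs a linear derivative; this is the usual inconsistency of the naive bootstrap for non-differentiable functionals. Concretely, take $K=2$, $q_0=1/2$, $\hat q=q_0+Z_M/\sqrt M$ and $\hat q^*=\hat q+Z_M^*/\sqrt M$. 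Then $\sqrt M(\hat d^*_\infty-\hat d_\infty)=|Z_M+Z_M^*|-|Z_M|$. Conditionally on the data, its law depends on the non-degenerate $Z_M$, so its conditional $(1-\alpha)$-quantile converges to no constant. It ranges between $\sigma u_{1-\alpha}$ and $\sigma u_{1-\alpha/2}$, whereas your $c$ equals $\sigma u_{1-\alpha/2}$.

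The conclusion of that case still holds, but you need an argument that does not identify the bootstrap limit; this is what the paper does. Take $k_M\in\arg\max_k\hat{\hat q}'_k$. Then
\[
\sqrt M(\hat d^*_\infty-\hat{\hat d}_\infty)\ge\sqrt M(\hat q^{*\prime}_{k_M}-\hat{\hat q}'_{k_M}).
\]
For each fixed $k$, the right-hand side with $k_M$ replaced by $k$ is, by your Step 2, conditionally asymptotically centered normal with positive variance. Hence for some $c>0$ the conditional distribution function of $\sqrt M(\hat d^*_\infty-\hat{\hat d}_\infty)$ at $c$ is, with probability tending to one, below $\max_k F_k(c)+\delta/2<1-\alpha$. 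This uses $\alpha<1/2$. So the conditional $(1-\alpha)$-quantile stays above $c>0$ with probability tending to one. Combined with $\hat{\hat q}=\hat q$ with probability tending to one, this gives $\mathbb P(\hat d_\infty>\xi^*_{1-\alpha})\to0$.
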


\begin{proof}[Proof of Theorem \ref{Th:BootstrapAsymptoticNormality}] ~~

\smallskip

\noindent
{\bf Proof of part(a).}  The proof  is separated in $3$ steps and draws on ideas of the proofs of Theorem $2$ in \cite{DetteMoellenhoff} and Theorem $2$ in \cite{hoadley1971}.

\bigskip

\begin{claim} \label{Claim2}
It holds that
    \begin{equation*}
    		\sqrt{M} \left( \hat{q}^{*} - \hat{\hat{q}} \right)
    \end{equation*}
    converges in distribution to $\mathcal{N}(0,\Gamma^{-1}(q_0))$ conditionally on $\mathcal{F}_M$ in probability with information matrix
    \begin{align} 
    \begin{split} \label{Eq:BootstrapNormalityGamma}
        \Gamma(q_0) = \lim_{M \to \infty} \frac{1}{M} \sum_{m=1}^M \mathbb{E}_{q_0}\left[ \dot{\phi}_m(X_m,q_0) \dot{\phi}_m(X_m,q_0)^T \right]
    \end{split}
    \end{align}
    where $\dot{\phi}_m$ is the gradient with respect to the parameter $q$ of the single-marker log-likelihood function $\phi_m$ given in \eqref{Eq:LogLikelihoodFunction}
\end{claim}

\begin{proof}[Proof of Claim 1.] \renewcommand{\qedsymbol}{\ensuremath{\triangle}}
    Firstly, we show that $\Gamma$ and $\Gamma^{-1}$ are well-defined for the function
    \begin{align*}
        \Gamma(q) &= \lim_{M \to \infty} \frac{1}{M} \sum_{m=1}^M \mathbb{E}_{q}\left[ \dot{\phi}_m(X_m,q) \dot{\phi}_m(X_m,q)^T \right]\\ &= \lim_{M \to \infty} \frac{1}{M} \sum_{m=1}^M \mathbb{E}_{q}\left[ -\ddot{\phi}_m(X_m,q) \right],
    \end{align*}
    where $\ddot{\phi}_m(x,q)$ is the second derivative of the function $\phi_m$ defined in \eqref{Eq:LogLikelihoodFunction}.
    By assumption $(A3)$ the limit of the information matrix estimator exists for every $q \in \Theta$ and is given by
    \begin{align*}
        \Gamma(q) &= \Big( 2 \int \sum_{i=1}^I \frac{(\mathbf{p_{k i}}-\mathbf{p_{K i}})(\mathbf{p_{\ell i}}-\mathbf{p_{K i}})}{\langle \mathbf{p_{\cdot i}}, q^\prime \rangle} \mu(d \mathbf{p}) \Big)_{k,\ell = 1,\dots,K-1}.
    \end{align*}
    By assumption $(A4)$, $\Gamma(q)$ is positive definite for every $q \in \Theta^\circ$. To see this, note that for every $h \in \mathbb{R}^{K-1}$ with $h \neq 0$ we have
    \begin{equation*}
        h^T \Gamma(q) h = 2 \int \sum_{i=1}^I \frac{1}{\langle \mathbf{p_{\cdot i}}, q^\prime \rangle} \Big( \sum_{k=1}^{K-1} h_k (\mathbf{p_{k i}}-\mathbf{p_{K i}}) \Big)^2 \mu (d \mathbf{p}).
    \end{equation*}
    This term is larger than $0$ iff $\sum h_k (\mathbf{p_{k i}}-\mathbf{p_{K i}}) \neq 0$ on a set with positive measure and this statement is equivalent to assumption $(A4)$.

    We show for the bootstrap MLE $\hat{q}^*$ that for all $\eta > 0$
    \begin{equation*}
        \mathbb{P} \left(  \| \hat{q}^* - q_0 \|  > \eta | \mathcal{F}_M  \right) \overset{\mathbb{P}}{\longrightarrow } 0.
    \end{equation*}
    To see this, note that we have for all $\eta, \delta > 0$
    \begin{align} \label{Eq:ExistenceBootstrapMLE}
        \begin{split}
        \mathbb{P} & \left( \mathbb{P} \left(  \| \hat{q}^* - q_0 \|  > \eta | \mathcal{F}_M \right) > \delta  \right)\\ &\leq \mathbb{P} \left( \mathbb{P} \left(  \| \hat{q}^* - \hat{\hat{q}} \|  > \frac{\eta}{2} \bigg\vert \mathcal{F}_M \right) > \frac{\delta}{2} \right) + \mathbb{P}(\|\hat{\hat{q}} - q_0\| > \frac{\eta}{2}).
        \end{split}
    \end{align}
    The first term on the right side of \eqref{Eq:ExistenceBootstrapMLE} converges to $0$ by Lemma \ref{Th:BootstrapConsistency} and the second term converges to $0$ by Lemma \ref{Th:ConsistencyConstrainedMLE}.

    Since $\Theta^\circ$ is an open subset of $\mathbb{R}^{K-1}$ there exists a constant $\eta > 0$ such that  $B(q_0,\eta) \subsetneq \Theta^\circ$, where 
    $B(q_0,\eta)$ denote   the open $(K-1)-$ball with radius $\eta$ centered at $q_0$.
    Then,
    \begin{equation*}
        \mathbb{P}(\hat{q}^* \in B(q_0,\eta)|\mathcal
        {F}_M) \overset{\mathbb{P}}{\longrightarrow} 1.
    \end{equation*}
    So on the high conditional-probability event $\{\hat{q}^* \in B(q_0,\eta)\}$, it holds that
    \begin{equation*}
        \sum_{m=1}^M \dot{\phi}_m(X_m^*,\hat{q}^*) = 0.
    \end{equation*}
    Analogously to \cite{hoadley1971}, we have  
    \begin{equation} \label{Eq:LinearizationBootstrapEstimator}
        \frac{1}{\sqrt{M}} \sum_{m=1}^M \dot{\phi}_m(X_m^*,\hat{\hat{q}}) = I_M^* [\sqrt{M}(\hat{q}^*-\hat{\hat{q}})],
    \end{equation}
    where
    \begin{equation*}
        I_M^* = \int_0^1 \frac{1}{M} \sum_{m=1}^M \Big[-\ddot{\phi}_m\Big(X_m^*,\hat{\hat{q}}+\xi(\hat{q}^*-\hat{\hat{q}})\Big)\Big] d\xi.
    \end{equation*}
    We show the following statement componentwise.
    \begin{equation} \label{Eq:Claim2IM}
        I_M^* \overset{\mathbb{P}}{\longrightarrow}\Gamma(q_0) \text{ conditionally on } \mathcal{F}_M \text{ in probability,}
    \end{equation}
    where $\Gamma(q_0)$ is defined in \eqref{Eq:BootstrapNormalityGamma}. 
   For this purpose, note that every component $(\ddot{\phi}_m)_{k \ell}$ is Lipschitz in $q$ with a Lipschitz constant independent of $k,\ell,x$ and $m$. This statement follows, because for all $k,\ell,r=1,\dots,K-1$
    \begin{align*}
        \Big\vert \frac{\partial}{\partial q_r} (\ddot{\phi}_m(x,q))_{k \ell} \Big\vert &= \Big\vert 2 \sum_{i=1}^I \frac{(p_{kim}-p_{Kim})(p_{\ell im}-p_{Kim})(p_{rim}-p_{Kim}) x_{im}}{\langle q^\prime, p_{\cdot im} \rangle^3} \Big\vert 
        \leq \frac{4}{\varepsilon_p^3}
    \end{align*}
    since $\vert p_{kim}-p_{Kim} \vert \leq 1$, $\sum_{i=1}^I x_{im} = 2$ and $\langle q^\prime, p_{\cdot im} \rangle \geq \varepsilon_p$. Thus, we have uniformly in $x$ and $m$ that for all $q_1,q_2 \in \Theta$
    \begin{align}
    \begin{split} \label{Eq:LipschitzContinuityHessianPhi}
        \Big \vert (\ddot{\phi}_m(x,q_1))_{k \ell} - (\ddot{\phi}_m(x,q_2))_{k \ell} \Big \vert &\leq \| \nabla_q (\ddot{\phi}_m(x,q))_{k \ell} \| \| q_1 - q_2 \| \\ &\leq 
        L_{\ddot{\phi}} \| q_1 - q_2 \|,
    \end{split}
    \end{align}
    where $L_{\ddot{\phi}}:=\frac{4\sqrt{K-1}}{\varepsilon_p^3} $.
    Now, we define
    \begin{align*}
        J_M^*(\hat{\hat{q}}) &= \frac{1}{M} \sum_{m=1}^M -\ddot{\phi}_m(X^*_m, \hat{\hat{q}}), \\
        J_M(\hat{\hat{q}}) &= \frac{1}{M} \sum_{m=1}^M \mathbb
        E[ -\ddot{\phi}_m(X^*_m, \hat{\hat{q}}) | \mathcal{F}_M ],
    \end{align*}
  use  the decomposition
    \begin{align}
     \label{Eq:DecompositionIM}
        &\vert (I_M^*)_{k \ell} - (\Gamma(q_0))_{k \ell} \vert \\
        &\leq \vert (I_M^*)_{k \ell} - (J_M^*(\hat{\hat{q}}))_{k \ell} \vert + \vert (J_M^*(\hat{\hat{q}}))_{k \ell} - (J_M(\hat{\hat{q}}))_{k \ell} \vert + \vert (J_M(\hat{\hat{q}}))_{k \ell} - (\Gamma(q_0))_{k \ell} \vert
        \nonumber
    \end{align}
    and show the convergence of each of the three summands separately.
    For the first term on the right side of \eqref{Eq:DecompositionIM}, we infer with the Lipschitz continuity derived in \eqref{Eq:LipschitzContinuityHessianPhi} that
    \begin{align*}
        &\vert (I_M^*)_{k \ell} - (J_M^*(\hat{\hat{q}}))_{k \ell} \vert \\
        &\leq \int_0^1 \frac{1}{M} \sum_{m=1}^M \vert (\ddot{\phi}_m(X_m^*, \hat{\hat{q}} + \xi (\hat{q}^* - \hat{\hat{q}})) )_{k \ell} - (\ddot{\phi}_m(X_m^*, \hat{\hat{q}}) )_{k \ell} \vert \: d\xi \\
        &\leq \int_0^1 \frac{1}{M} \sum_{m=1}^M L_{\ddot{\phi}} \| \hat{\hat{q}} + \xi (\hat{q}^* - \hat{\hat{q}}) - \hat{\hat{q}} \| \: d\xi \\
       & = L_{\ddot{\phi}} \| \hat{q}^* - \hat{\hat{q}} \|
    \end{align*}
    and the last term converges to $0$ in probability
    conditionally on $\mathcal{F}_M$ in probability by Lemma \ref{Th:BootstrapConsistency}.
    
    For the second term on the right side of \eqref{Eq:DecompositionIM}, we have that
    \begin{align*}
        &\vert (J_M^*(\hat{\hat{q}}))_{k \ell} - (J_M(\hat{\hat{q}}))_{k \ell} \vert \\
        &= \Big\vert \frac{1}{M} \sum_{m=1}^M \Big[ (\ddot{\phi}_m(X_m^*, \hat{\hat{q}}) )_{k \ell} - \mathbb{E}[\ddot{\phi}_m(X_m^*, \hat{\hat{q}}) | \mathcal{F}_M]_{k \ell} \Big] \Big\vert
    \end{align*}
    converges in probability conditionally on $\mathcal{F}_M$ in probability to 0 by a conditional law of large numbers. This holds by the conditional Chebychev inequality since
    \begin{equation*}
        \sup_{m} \text{Var}(\ddot{\phi}(X_m^*,\hat{\hat{q}})|\mathcal{F}_M) \leq \frac{4}{\varepsilon_p^4}
    \end{equation*}
    almost surely.
    To prove the convergence for the third term in \eqref{Eq:DecompositionIM}, we show that
    \begin{align}
    \begin{split} \label{Eq:DecompositionThirdTermIM}
        \vert (J_M(\hat{\hat{q}}))_{k \ell} - (\Gamma(q_0))_{k \ell} \vert 
        \leq \sup_{q \in \Theta} &\Big \vert \frac{2}{M} \sum_{m=1}^M \sum_{i=1}^I \frac{(p_{kim}-p_{Kim})(p_{\ell im}-p_{Kim})}{\langle q^\prime,p_{\cdot im} \rangle} \\ &- 2 \int \sum_{i=1}^I \frac{(\mathbf{p}_{ki}-\mathbf{p}_{Ki})(\mathbf{p}_{\ell i}-\mathbf{p}_{Ki})}{\langle q^\prime,\mathbf{p}_{\cdot i} \rangle} \mu (d \mathbf{p}) \Big \vert\\ 
        & + \vert \Gamma(\hat{\hat{q}})_{k \ell} - \Gamma(q_0)_{k \ell} \vert \overset{\mathbb{P}}{\longrightarrow} 0.
    \end{split}
    \end{align}
    The second term on the right side of the inequality in \eqref{Eq:DecompositionThirdTermIM} converges to $0$ in probability by the continuous mapping theorem together with the continuity of $\Gamma$ and Lemma \ref{Th:ConsistencyConstrainedMLE}. To prove the convergence for the first term on the right side of the inequality in \eqref{Eq:DecompositionThirdTermIM}, we define
    \begin{equation*}
        F = \{ g_q ~| ~q \in \Theta \}
    \end{equation*}
    with
    \begin{equation*}
	g_q:
	\left\{
	\begin{alignedat}{2}
		(\mathbb{S}^I)^K_{\varepsilon_p} &\to \mathbb{R}\\
		p &\mapsto \sum_{i=1}^I \frac{(p_{ki} - p_{Ki})(p_{\ell i} - p_{Ki})}{\langle q^\prime,p_i \rangle}.
	\end{alignedat}
	\right.
\end{equation*} 
and note that $F$ is equicontinuous and uniformly bounded as a set of continuous and bounded functions on a compact space. By Corollary $11.3.4$ in \cite{Dudley_2002}, we get
\begin{equation*}
    \lim_{M \to \infty} \sup_{q \in \Theta} \left| \int g_q(p) \: d(\mu_M - \mu)(p) \right| = 0
\end{equation*}
proving the desired convergence and completing the proof of \eqref{Eq:Claim2IM}.

We continue applying a central limit theorem to the left hand side of \eqref{Eq:LinearizationBootstrapEstimator}.
For this purpose we note that the condition of the conditional Ljapunov theorem is satisfied for every linear combination of the components of $\dot{\phi}_m(X_m^*,\hat{\hat{q}})$, because 
    \begin{equation*} 
		\frac{1}{M^{\frac{3}{2}}} \sum_{m=1}^{M} \mathbb{E}\left[\left| \dot{\phi}_{m,k}(X_m^*,\hat{\hat{q}}) \dot{\phi}_{m,\ell}(X_m^*,\hat{\hat{q}}) \dot{\phi}_{m,r}(X_m^*,\hat{\hat{q}})\right|\bigg| \mathcal{F}_M \right] \leq \frac{8 }{\varepsilon_p^3 \sqrt{M}} \longrightarrow 0~~
	\end{equation*} 
    almost surely for all $k, \ell, r = 1,\dots, K-1$.
    Thus, it follows by the conditional Ljapunov central limit theorem and the conditional Cramér-Wold theorem that
    \begin{equation*}
        J_M(\hat{\hat{q}})^{-\frac{1}{2}}\frac{1}{\sqrt{M }} \sum_{m=1}^M \dot{\phi}_m(X_m^*,\hat{\hat{q}})
    \end{equation*}
    converges to a multivariate standard normal distribution conditionally on $\mathcal{F}_M$ in probability.
    
    By \eqref{Eq:Claim2IM}, \eqref{Eq:DecompositionThirdTermIM} and the conditional slutsky theorem we finally obtain  that
    \begin{equation*}
        (I_M^*)^{-1} \frac{1}{\sqrt{M}} \sum_{m=1}^M \dot{\phi}_m(X_m^*,\hat{\hat{q}})
    \end{equation*}
    converges to a multivariate normal distribution with expected value $0$ and covariance matrix $\Gamma^{-1}(q_0)$ conditionally on $\mathcal{F}_M$ in probability. This proves the statement of Claim \ref{Claim2} since the equality in \eqref{Eq:LinearizationBootstrapEstimator} holds with high conditional probability.
    
    \end{proof}

    \begin{claim} \label{Claim3}
        It holds for any $\alpha \in (0,0.5)$ that
    \begin{equation*}
        \lim_{M \to \infty} \mathbb{P}(\hat{d}_\infty > \xi^*_{1-\alpha}) = 0
    \end{equation*}
    if $d_\infty < \varepsilon_\infty$.
    \end{claim}

    \begin{proof}[Proof of Claim 2.] \renewcommand{\qedsymbol}{\ensuremath{\triangle}}

    Recall that $\xi_{1-\alpha}^*$ is the $(1-\alpha)$-quantile of the conditional distribution of $\hat{d}^*_\infty$ and define
    \begin{equation*}
        \hat{\hat{d}}_\infty = \max_{k=1}^K \hat{\hat{q}}^\prime_k
    \end{equation*}
    It follows that 
	\begin{align}
    \begin{split} \label{Eq:Claim3Decomposition}
		\mathbb{P}(\hat{d}_\infty > \xi_{1-\alpha}^*) &= \mathbb{P}(\hat{d}_\infty > \xi_{1-\alpha}^* \land \hat{d}_\infty \leq \varepsilon_\infty) + \mathbb{P}(\hat{d}_\infty > \xi_{1-\alpha}^* \land \hat{d}_\infty > \varepsilon_\infty) \\
		&\leq \mathbb{P}(\hat{\hat{d}}_\infty > \xi_{1-\alpha}^*) + \mathbb{P}(\hat{d}_\infty > \varepsilon_\infty)
    \end{split}
	\end{align}
	We show for both terms on the right side of in \eqref{Eq:Claim3Decomposition} that they converge to $0$ for $M \to \infty.$
   For the second term on the right side of \eqref{Eq:Claim3Decomposition} this statement is obvious  because $\hat{d}_\infty \overset{\mathbb{P}}{\longrightarrow} d_\infty < \varepsilon_\infty$.
    To infer the convergence for the first term,  note that 
    \begin{equation*}
        p_{1-\alpha}^* = \sqrt{M}(\xi_{1-\alpha}^* - \hat{\hat{d}}_\infty)
    \end{equation*}
    is the conditional $(1-\alpha)-$quantile of the statistic $\sqrt{M}(\hat{d}^*_{\infty}-\hat{\hat{d}}_\infty)$.
    Consequently,
    \begin{equation*}
        \mathbb{P}(\hat{\hat{d}}_\infty > \xi_{1-\alpha}^*) = \mathbb{P}(p_{1-\alpha}^* < 0)
    \end{equation*}
    and in order to show Claim \ref{Claim3} it is enough to prove
    \begin{equation} \label{Eq:Claim3QuantileNegative}
        \mathbb{P}(p_{1-\alpha}^* < 0) \longrightarrow 0
    \end{equation}
    as  $M \to \infty$.
    To do so, choose any $k_M \in \arg \max_k \hat{\hat{q}}^\prime_k$. Then, conditionally on $\mathcal{F}_M$
    \begin{equation} \label{Eq:Claim3TM*}
        \sqrt{M}(\hat{d}_\infty^* - \hat{\hat{d}}_\infty) \geq \sqrt{M} (\hat{q}_{k_M}^{* \prime} - \hat{\hat{q}}_{k_M}^\prime) =: T_{M, k_M}^*
    \end{equation}
    almost surely. For a fixed $k$, we have
    \begin{equation} \label{Eq:Claim3T}
         T_{M,k}^* \overset{\mathcal{D}}{\longrightarrow} T_k \sim \mathcal{N}(0, a_k^T \Gamma^{-1}(q_0)a_k)
    \end{equation}
    conditionally on $\mathcal{F}_M$ in probability,
    where 
    \begin{equation} \label{Eq:ak}
        a_k = \begin{cases}
             e_{k} & \text{if } k \neq K, \\
             (-1, \dots, -1)^T & \text{if } k = K.
        \end{cases}
    \end{equation}
    Note that for all $k = 1,\dots,K$ $a_k^T \Gamma^{-1}(q_0)a_k > 0$ by assumption $(A4)$.
    
    Since $\alpha \in (0,0.5)$  by assumption, the $(1-\alpha)-$quantile of $T_k$, say $\xi_{1-\alpha}^{T_k},$ is positive. Consequently, we can choose $\delta > 0$ and $c > 0$ such that
    \begin{equation} \label{Eq:FkBound}
        \max_{k=1}^K F_k(c) < 1- \alpha - \delta,
    \end{equation}
    where $F_k$ is the distribution function of $T_k$.
     We denote by $F_M^*$ the distribution function of $\sqrt{M}(\hat{d}_\infty^* - \hat{\hat{d}}_\infty)$ conditionally on $\mathcal{F}_M$, by $F_{M,k}$ the distribution function of $T_{M,k}^*$ conditionally on $\mathcal{F}_M$. By \eqref{Eq:Claim3T} and the continuity of $F_k$, we get
     \begin{equation} \label{Eq:Claim3UniformConvergenceFM}
         \max_{k=1}^K \sup_{t \in \mathbb{R}} |F_{M,k}(t)-F_k(t)| \overset{\mathbb{P}}{\longrightarrow} 0
     \end{equation}
     as $M \to \infty$. 
     It follows from \eqref{Eq:Claim3TM*} that 
     \begin{equation*}
         F_M^*(c) \leq F_{M,k_M}(c) \leq \max_{k=1}^K F_{M,k}(c)
     \end{equation*}
     almost surely
     and together with \eqref{Eq:Claim3UniformConvergenceFM} and the uniform continuity of $F_k$, this yields for any  $\gamma > 0$
     \begin{equation} \label{Eq:Claim3FM*F}
         \mathbb{P}(F_M^*(c) \leq \max_{k=1}^K F_k(c) + \gamma) \longrightarrow 1
     \end{equation}
     as  $M \to \infty$. By setting $\gamma = \frac{\delta}{2}$ in \eqref{Eq:Claim3FM*F}, we obtain with \eqref{Eq:FkBound} that
     \begin{equation*}
         \mathbb{P} \Big (F_M^*(c) > 1-\alpha-\frac{\delta}{2} \Big ) \longrightarrow 0.
    \end{equation*}
    If $F_M^*(c) < 1-\alpha-\frac{\delta}{2}$, then $p_{1-\alpha}^* > c > 0$,
    which proves  \eqref{Eq:Claim3QuantileNegative} because 
    \begin{align*}
        \mathbb{P}\Big(p_{1-\alpha}^* < 0\Big) &\leq \mathbb{P}\Big(F_M^*(c) \geq 1- \alpha- \frac{\delta}{2}\Big) \longrightarrow 0.
    \end{align*}

    \end{proof}

    \begin{claim}
        It holds for any $\alpha \in (0,0.5)$ that
    \begin{equation*}
        \lim_{M \to \infty} \mathbb{P}(\hat{d}_\infty > \xi^*_{1-\alpha}) = \alpha
    \end{equation*}
    if $d_\infty = \varepsilon_\infty$.
    \end{claim}

    \begin{proof}[Proof of Claim 3.] \renewcommand{\qedsymbol}{\ensuremath{\triangle}}

    Since $\varepsilon_\infty \in (\frac{1}{2},1)$ by assumption, we know that
    \begin{equation*}
        \arg \max \{q_{0,k}: k=1,\dots,K\} = \{k_0\}
    \end{equation*}
    is a singleton. Thus, it follows by \cite{Carcamo} that
    \begin{equation*} 
	d_\infty:
	\left\{
	\begin{alignedat}{2}
		\Theta &\to \mathbb{R}\\
		q &\mapsto \max_{k=1}^K q^\prime_k
	\end{alignedat}
	\right.
\end{equation*}
is Hadamard differentiable in $q_0$ with derivative $d^{\prime}_{\infty, q_0}(h) = a_{k_0}^T h$ with $a_{k_0}$ defined in \eqref{Eq:ak}. Then, it follows by Theorem $23.9$ (Delta Method for Bootstrap) in \cite{vanDerVaart} and Claim \ref{Claim2} that 
\begin{equation}
    \sqrt{M} (\hat{d}^*_\infty - \hat{\hat{d}}_\infty) \overset{\mathcal{D}}{\longrightarrow} T_{k_0} \sim \mathcal{N}(0, a_{k_0}^T \Gamma^{-1}(q_0) a_{k_0})
\end{equation}
conditionally on $\mathcal{F}_M$ in probability.

Note that 
\begin{equation*} 
    \rho_{1-\alpha}^* := \frac{\sqrt{M}(\xi_{1-\alpha}^*-\hat{\hat{d}}_\infty)}{\sigma_{k_0}}
\end{equation*}
with $\sigma_{k_0}^2 = a_{k_0}^T \Gamma^{-1}(q_0) a_{k_0} > 0$ is the conditional $(1-\alpha)$-quantile of $\sqrt{M}(\hat{d}_\infty^*- \hat{\hat{d}}_\infty)/\sigma_{k_0}$. With Lemma $21.2$ in \cite{vanDerVaart} we get
\begin{equation} \label{Eq:Claim4QuantileConvergence}
    \rho_{1-\alpha}^* \overset{\mathbb{P}}{\longrightarrow} u_{1-\alpha},
\end{equation}
where $u_{1-\alpha}$ is the $(1-\alpha)$-quantile of the standard normal distribution.
Thus, for every $\alpha \in (0,0.5)$, it holds by \eqref{Eq:Claim4QuantileConvergence} that
\begin{align*}
    \mathbb{P}(\hat{\hat{d}}_\infty > \xi_{1-\alpha}^*) = \mathbb{P}(\rho_{1-\alpha}^* < 0) \longrightarrow 0.
\end{align*}
This statement is equivalent to statement $(A.13)$ in \cite{DetteMoellenhoff}. By replacing $<$ with $>$ and $\geq$ with $\leq$ in the proof of Theorem $2$ in \cite{DetteMoellenhoff} it follows with the same arguments that
\begin{equation} \label{Eq:Claim4FinalEquality}
    \lim_{M \to \infty} \mathbb{P}(\hat{d}_\infty > \xi_{1-\alpha}^*) = \lim_{M \to \infty} \mathbb{P} \Big(\frac{\sqrt{M} (\hat{d}_\infty - d_\infty)}{\sigma_{k_0}} > \frac{\sqrt{M}(\xi_{1-\alpha}^* - \hat{\hat{d}}_\infty)}{\sigma_{k_0}}\Big).
\end{equation}
By \begin{equation*}
    \sqrt{M}(\hat{q}-q_0) \overset{\mathcal{D}}{\longrightarrow} \mathcal{N}(0,\Gamma^{-1}(q_0)),
\end{equation*}
that holds by the same arguments given in the proof of Claim \ref{Claim2},
and \eqref{Eq:Claim4QuantileConvergence} it holds that the right side of \eqref{Eq:Claim4FinalEquality} converges to $1-\Phi(u_{1-\alpha}) = \alpha$, where $\Phi$ denotes the cumulative distribution function of the standard normal distribution.
 \end{proof}
 \smallskip

\noindent
{\bf Proof of part(b).}  
%
%
 Recall that we know by the continuous mapping theorem that
    \begin{equation*}
        \hat{d}_\infty \overset{\mathbb{P}}{\longrightarrow} d_\infty > \varepsilon_\infty.
    \end{equation*}
    Thus, there exists $\delta > 0$, such that
    \begin{equation*}
        \mathbb{P}(\hat{d}_\infty > \varepsilon_\infty + 2\delta) \longrightarrow 1.
    \end{equation*}
    On this high probability event, we have $\hat{\hat{q}} = \tilde{q}$ and $\hat{\hat{d}}_\infty = \varepsilon_\infty$. 
    By applying Lemma \ref{Th:BootstrapConsistency} with $\bar{q} := \hat{\hat{q}}$, we get
    \begin{equation*}
        \mathbb{P}(\|\hat{q}^* - \hat{\hat{q}} \| > \eta | \mathcal{F}_M) \overset{\mathbb{P}}{\longrightarrow} 0
    \end{equation*}
    and by the Lipschitz continuity of $d_\infty(q)$ defined in \eqref{Eq:FuncionDInfty}, it follows that
    \begin{equation*}
        \mathbb{P}(\hat{d}_\infty^* > \varepsilon_\infty + \delta | \mathcal{F}_M) \overset{\mathbb{P}}{\longrightarrow} 0.
    \end{equation*}
    Hence,
    \begin{equation*}
        \mathbb{P}(\xi_{1-\alpha}^* \leq \varepsilon_\infty + \delta) \longrightarrow 1.
    \end{equation*}
    Finally, 
    \begin{align*}
        \mathbb{P}(\hat{d}_\infty > \xi^*_{1-\alpha}) \geq \mathbb{P}(\hat{d}_\infty > \varepsilon_\infty + 2 \delta, \: \xi^*_{1-\alpha} \leq \varepsilon_\infty + \delta) \longrightarrow 1. 
    \end{align*}

    \end{proof}



\section{Empirical Evaluation of the Statistical Test}\label{sec:simulations}

To empirically evaluate the statistical testing procedure described in Algorithm \ref{alg}, we simulate genome data under the Admixture Model and analyze the test's performance on simulated data for which we know the ground truth. 

Firstly, we generate the ancestral population allele frequencies by setting $I = 2$ and sampling $p_{k1m}$ for $k=1,\dots,K$ and $m=1,\dots,M$ i.i.d. from either a  $\mathcal U(0, 1)$-distribution or a $\beta(0.5, 2)$-distribution. Then we set $p_{k2m} = 1-p_{k1m}$. This results in scenarios with either $p_{k \cdot m} \sim $ Dirichlet$(1,1)$ or $p_{k \cdot m} \sim $ Dirichlet$(0.5,2)$. 


We define the true individual admixture by $q_0$ and the true value of the test statistic by 
\begin{equation*}
    T := \max_{k=1,\dots,K} q^\prime_{0,k}
\end{equation*}
with $q^\prime = (q_1,\dots,q_{K-1},1-\sum_{k} q_k)$. We sample $q_0$ by first setting $q_k = T$ for $k \sim \mathcal U(\{1,...,K\}).$ Then, for the remaining $K-1$ components, we sample $(q'_1,..., q'_{k-1}, q'_{k+1}, q'_K) \sim$ Dirichlet$((1,...,1)) \in \mathbb R^{K-1}$. We finally normalize the true ancestral vector. That is we have 
$$q_0^\prime = \left(q'_1 (1-T),..., q'_{k-1} (1-T), T, q'_{k+1} (1-T), q'_K (1-T)\right).$$

For each simulation scenario, we fix a threshold $\varepsilon_\infty$. To generate data inside the null hypothesis we choose individual admixtures $q_0$ with $T < \varepsilon_\infty$, where smaller values of $T$ correspond to a more extreme null hypothesis. Data is generated on the boundary by choosing $q_0$ with $T = \varepsilon_\infty$. Lastly, we simulate under the alternative by generating gene samples from an individual admixture $q_0$ with $T > \varepsilon_\infty$. Here, larger values of $T$ correspond to more extreme alternatives. 


We consider scenarios with $K \in \{2,5\}$ and $M \in \{50, 100, 200, 500, 1000 \}$ and thresholds $0.63, 0.65$ and $0.75$. For the scenarios with thresholds $0.63$ or $0.65$ we constructed $4$ individual admixtures under the null hypothesis and $7$ individual admixtures under the alternative hypothesis. Furthermore, for the scenario with threshold $0.75$ we constructed $6$ individual admixtures under the null hypothesis and $5$ individual admixtures under the alternative hypothesis. We set the significance level to $\alpha = 0.05$, run the test with $B=100$ bootstrap repetitions and report the fraction of rejected null hypotheses from $1000$ simulation runs. Under the null hypothesis, this rejection rate is the empirical type $I$ error of the test, while it corresponds to the power of the test under the alternative hypothesis.  We constantly used $B = 100$ bootstrap repetitions as empirical evaluation showed that increasing $B$ to $200$ or $1000$ does not influence the performance of the test much. 

For $K = 2$ and $p_{k1m} \sim \mathcal{U}(0,1)$, the rejection rates are shown in Figures \ref{fig:pp_K2_1} and \ref{fig:pp_K2_2}. For $K = 5$ and $p_{k1m} \sim \mathcal{U}(0,1)$, the corresponding results are shown in Figures \ref{fig:pp_K5_1} and \ref{fig:pp_K5_2}. We additionally present the results for $K = 2$ and $p_{k1m} \sim \beta(0.5,2)$ in Figures \ref{fig:pp_K2_1_e} and \ref{fig:pp_K2_2_e}.

\begin{remark}[Discussion of the simulation scenarios]
    The case $K = 2$ is commonly considered in the Admixture Model \cite{heinzel2024revealing}. We choose $K = 5$ because this corresponds to the number of ancestral populations in the 1000 Genomes data \citep{10002015global}. In forensic science, marker sets with $M \in [50, 200]$ are common \citep{kidd2014, ruiz2023development, maurer2025enhancing}, which motivates our choice of these values. To study the effect of different informational value of the ancestral allele frequencies, we do not only consider i.i.d. Dirichlet(1,1) allele frequencies, but also i.i.d. Dirichlet(0.5,2) allele frequencies. We choose the thresholds $\varepsilon_\infty = 0.75$ and thresholds around  $\varepsilon_\infty = 5/8$ because these values correspond to the expected allele contribution under an idealized pedigree model with non-admixed ancestors of three out of four grandparents or five out of eight great-grandparents originating from the same population, respectively. Note that we round these threshold values due to the method used to compute the constrained MLE $\tilde q$.
\end{remark}

We start by discussing the simulation results for the ancestral allele frequencies following the Dirichlet$(1,1)$ distribution.
Figures \ref{fig:pp_K2_1} and \ref{fig:pp_K2_2} show that the type $I$ error inside the null hypothesis is close to zero across all considered values of $M$. Especially for values of $T$ close to the boundary $\varepsilon_\infty$ we see that the type $I$ error decreases when $M$ increases. This corresponds to the behavior proved in Theorem \ref{Th:BootstrapAsymptoticNormality} $(a)$. 

At the boundary, that is when $T = \varepsilon_\infty$, the type $I$ error is generally close to the nominal level $\alpha = 0.05$, but finite-sample deviations occur. Table \ref{tab:first} shows the exact rejection rates on the boundary for Figure \ref{fig:pp_K2_1} and \ref{fig:pp_K2_2}, since they are important indicators for the performance of the test. We see a slightly liberal behavior for small $M$. For example, the rejection rate is $0.08$ for $M=50$ and $T=0.63$. For large $M$ the type $I$ error seems to converge to the desired level. This is expected by the result in Theorem \ref{Th:BootstrapAsymptoticNormality} $(a)$.

Lastly, Figure \ref{fig:pp_K2_1} and Figure \ref{fig:pp_K2_2} show that for a fixed scenario $T$ the power increases as $M$ increases. For $K=2$, $\varepsilon_\infty = 0.63$ and $T=0.8$ the rejection rate grows from around $0.8$ for $M= 200$ to nearly full power for $M=500$ and $M=1000$. This is not surprising, since Theorem \ref{Th:BootstrapAsymptoticNormality} $(b)$ states that the power of the test converges to $1$ for $M \to \infty$.

For fixed $M$, the power increases the more extreme the alternative is, that is the larger T is. This is expected since the test can detect the alternative better the more extreme it is. For large values of $M$, the power grows faster than for small $M$. In the scenario $K=2$, $\varepsilon_\infty = 0.75$ full power is reached for $M=1000$ already at $T= 0.85$ and for $M=200$ at $T=0.95$.

\begin{figure}[h!]
    \centering
    \begin{minipage}{0.48\textwidth}
        \centering
        \includegraphics[width=\textwidth]{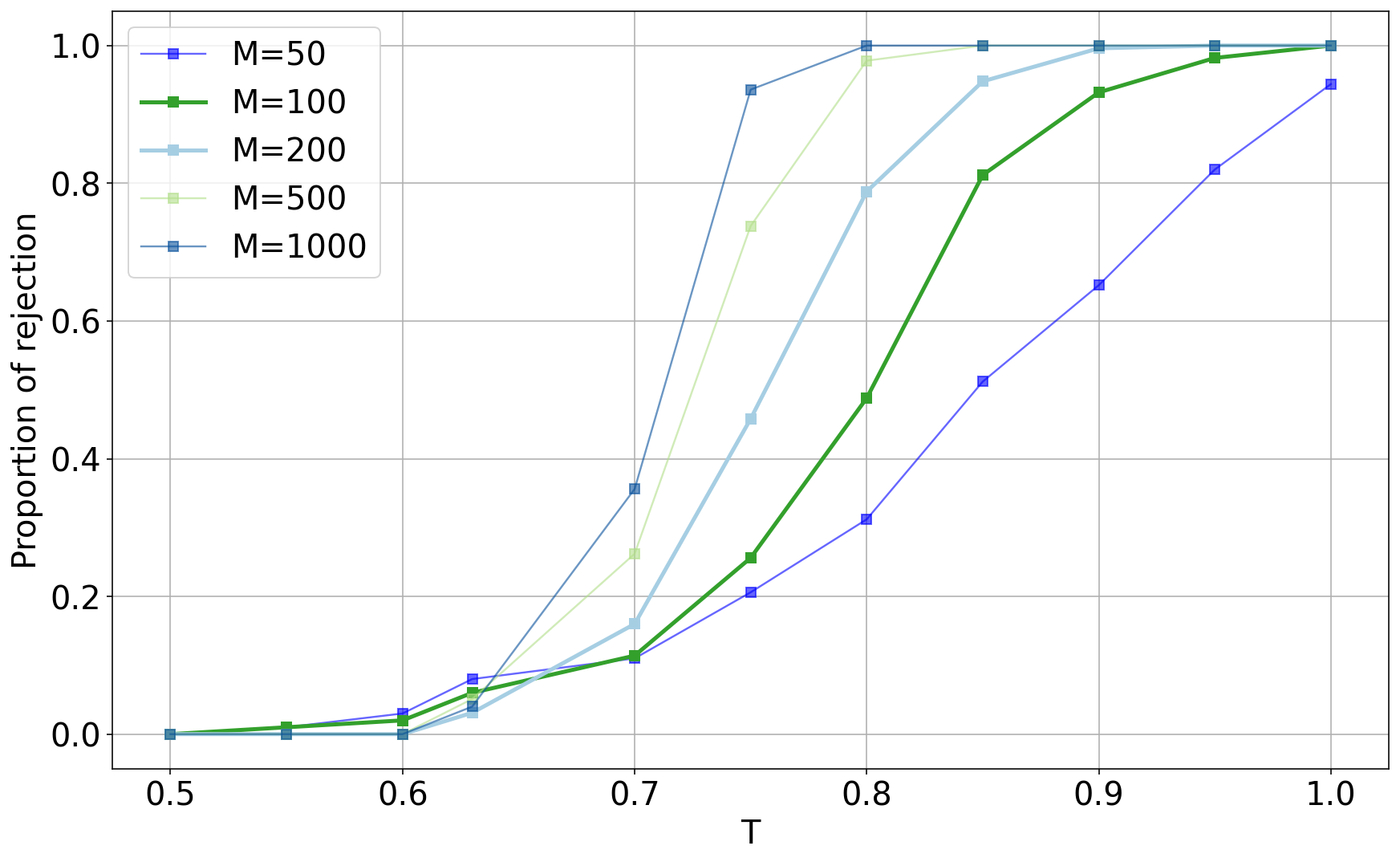}
        \caption{Fraction of simulations in which the null hypothesis is rejected for $K = 2$, $\varepsilon_\infty = 0.63$, and different values of $T$ and $M$. The first components of the allele frequencies are i.i.d. $\mathcal U([0,1])$-distributed.}
        \label{fig:pp_K2_1}
    \end{minipage}
    \hfill
    \begin{minipage}{0.48\textwidth}
        \centering
        \includegraphics[width=\textwidth]{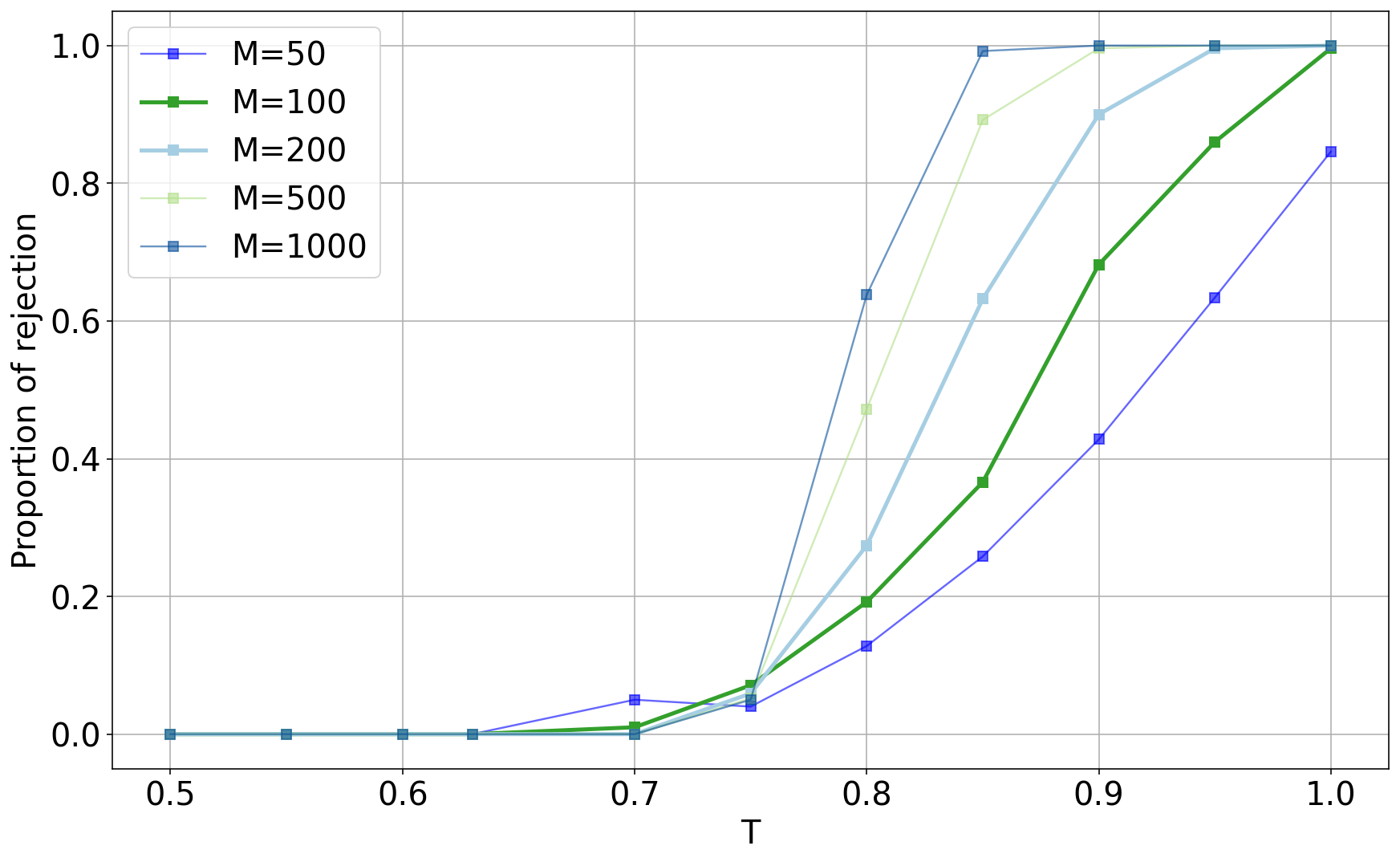}
        \caption{Fraction of simulations in which the null hypothesis is rejected for $K = 2$, $\varepsilon_\infty = 0.75$, and different values of $T$ and $M$. The first components of the allele frequencies are i.i.d. $\mathcal U([0,1])$-distributed.}
        \label{fig:pp_K2_2}
    \end{minipage}
\end{figure}

We observe the same patterns described above for $K = 5$ in Figures \ref{fig:pp_K5_1} and \ref{fig:pp_K5_2}. When comparing these figures with Figures \ref{fig:pp_K2_1} and \ref{fig:pp_K2_2}, we see that for $\varepsilon_\infty = 0.75$ the test yields higher power for $K = 2$ than for $K=5$ across all choices of $M$ and alternatives $T$. On the other hand, we see a better performance of the test for $K = 5, \varepsilon_\infty = 0.65$ than for $K = 2, \varepsilon_\infty = 0.63$. This suggests that the performance of the test does not only depend on $K$ and $\varepsilon_\infty$ but also on the position of the true individual admixture $q_0$ in the simplex $\mathbb{S}^K$ and the finite-sample behavior of the constrained bootstrap. For $\varepsilon_\infty = 0.75$ and $K=5$, the remaining ancestry mass of $0.25$ is distributed over several small components. This places $q_0$ close to the boundary of the simplex and makes the constrained estimation problem more difficult. This leads to conservative critical values and reduced power. For lower thresholds like $\varepsilon_\infty = 0.65$, the minor components are less extreme, which may explain the better performance observed for $K=5$.
This interpretation is also supported by the fact that we only see rejection rates on the boundary $T=\varepsilon_\infty$ well below the nominal level $\alpha = 0.05$ for $\varepsilon_\infty=0.75$ and $K=5$ in Table \ref{tab:second}. This suggests as well that the test is more conservative in this setting.

\begin{figure}[h!]
    \centering
    \begin{minipage}{0.48\textwidth}
        \centering
        \includegraphics[width=\textwidth]{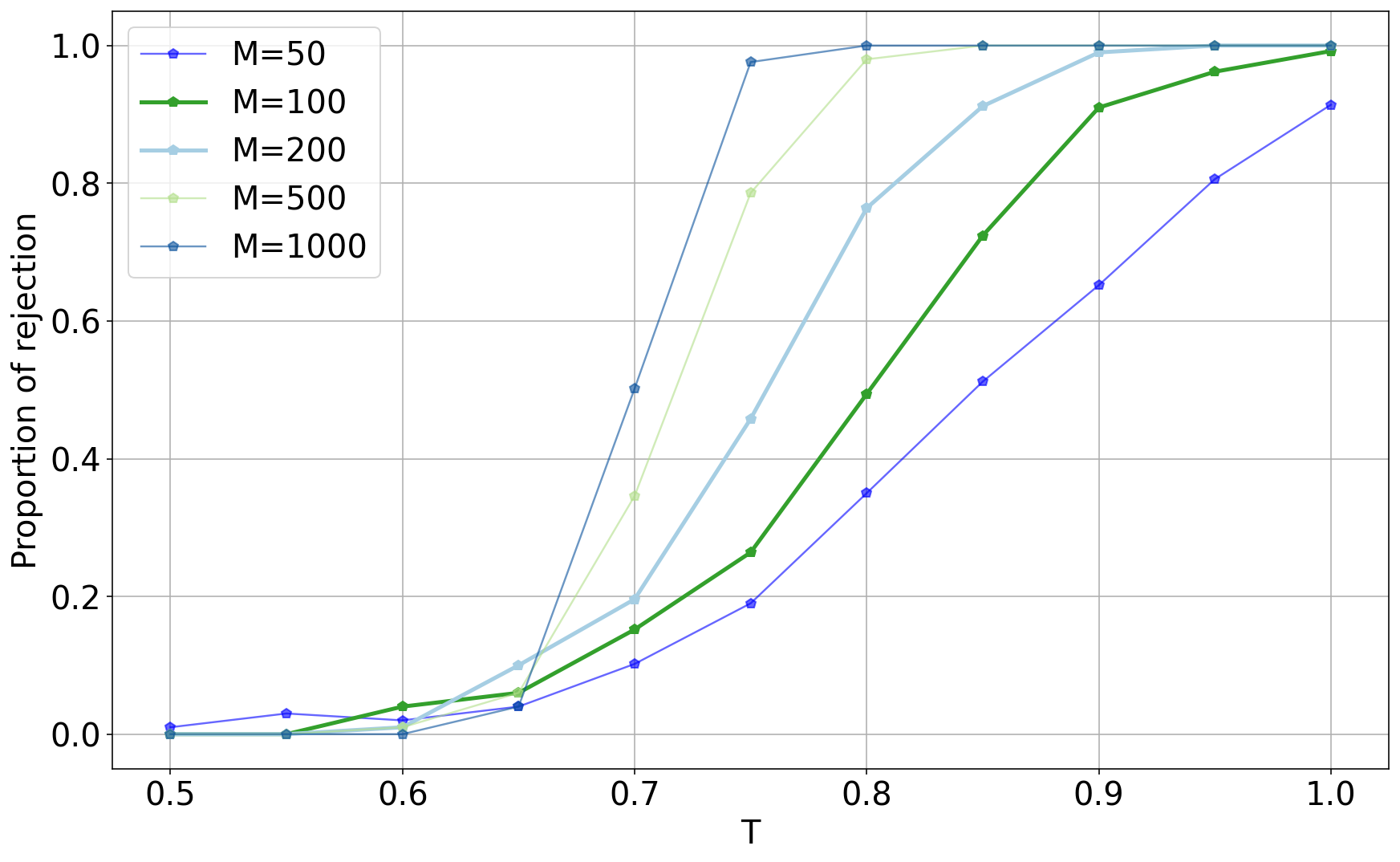}
        \caption{Fraction of simulations in which the null hypothesis is rejected for $K = 5$, $\varepsilon_\infty = 0.65$, and different values of $T$ and $M$. The  first components of the allele frequencies are i.i.d. $\mathcal U([0,1])$-distributed.}
        \label{fig:pp_K5_1}
    \end{minipage}
    \hfill
    \begin{minipage}{0.48\textwidth}
        \centering
        \includegraphics[width=\textwidth]{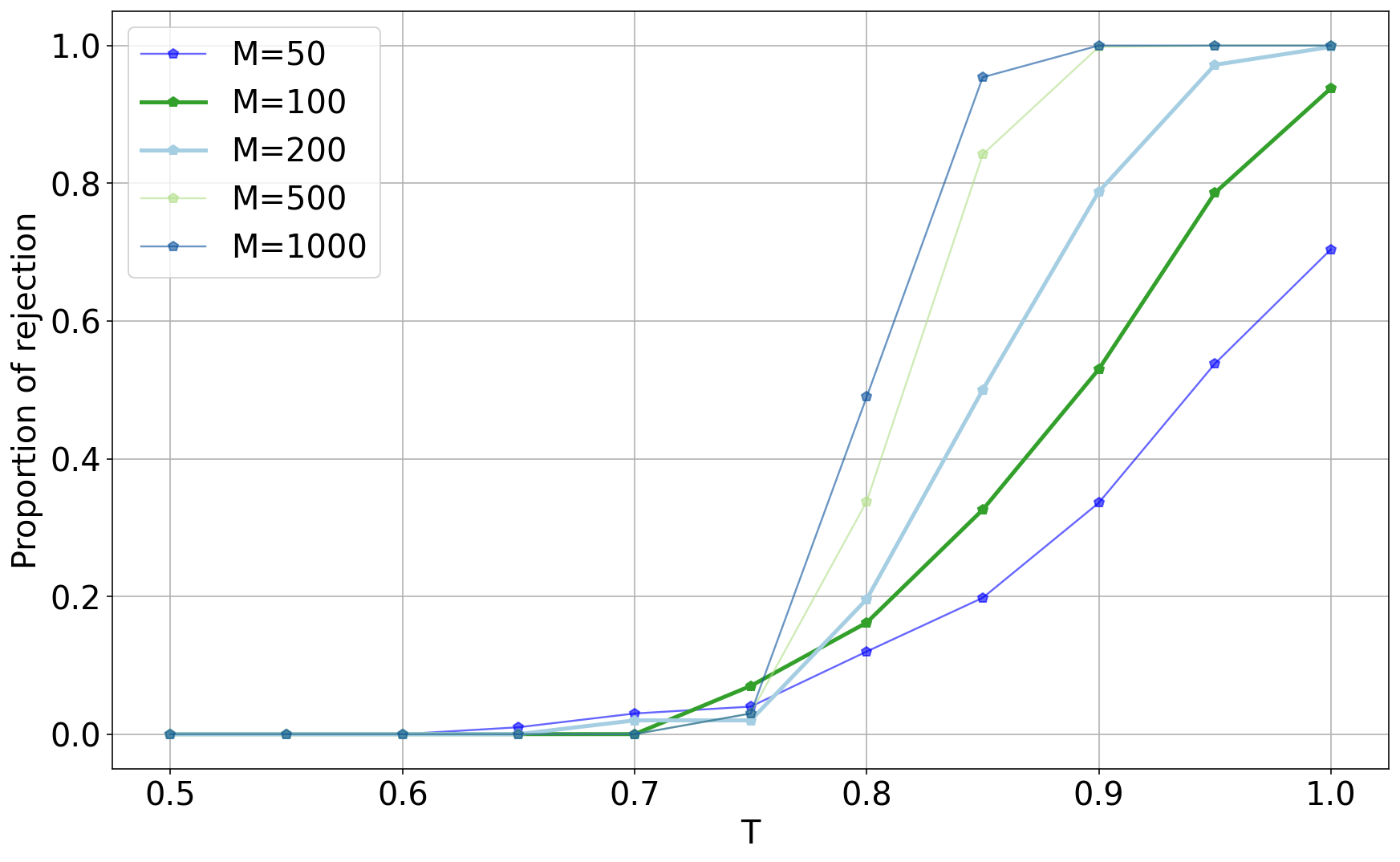}
        \caption{Fraction of simulations in which the null hypothesis is rejected for $K = 5$, $\varepsilon_\infty = 0.75$, and different values of $T$ and $M$. The first components of the allele frequencies are i.i.d. $\mathcal U([0,1])$-distributed.}
        \label{fig:pp_K5_2}
    \end{minipage}
\end{figure}


\begin{table}[h!]
    \centering
    \begin{minipage}{0.48\textwidth}
        \centering
        \begin{tabular}{|c|c|c|c|}
           $K$  & $M$ & $T$ & Reject $H_0$ \\
           \hline 
            2 & 50 & 0.63 & 0.080 \\
            2 & 100 & 0.63 & 0.060\\
            2 & 200 & 0.63 & 0.031\\
            2 & 500 & 0.63 & 0.051\\
            2 & 1000 & 0.63 & 0.045\\
            \hline
            2 & 50 & 0.75 & 0.040 \\
            2 & 100 & 0.75 & 0.071\\
            2 & 200 & 0.75 & 0.059 \\
            2 & 500 & 0.75 & 0.052\\
            2 & 1000 & 0.75 & 0.050\\
        \end{tabular}
        \caption{Simulation results at the boundary $T = \varepsilon_\infty$, with $K = 2$ and i.i.d. $\mathcal U([0,1])$-distributed  first components of the allele frequencies. The column \texttt{Reject $H_0$} reports the fraction of simulations for which the null hypothesis is rejected.}
        \label{tab:first}
    \end{minipage}
    \hfill
    \begin{minipage}{0.48\textwidth}
        \centering
        \begin{tabular}{|c|c|c|c|}
           $K$  & $M$ & $T$ & Reject $H_0$ \\
           \hline
            5 & 50 & 0.65 & 0.047 \\
            5 & 100 & 0.65 & 0.061 \\
            5 & 200 & 0.65 & 0.099\\
            5 & 500 & 0.65 & 0.061\\
            5 & 1000 & 0.65 & 0.045\\
            \hline
            5 & 50 & 0.75 & 0.039\\
            5 & 100 & 0.75 & 0.071 \\
            5 & 200 & 0.75 & 0.021 \\
            5 & 500 & 0.75 & 0.032\\
            5 & 1000 & 0.75 & 0.034\\
        \end{tabular}
        \caption{Simulation results at the boundary $T=\varepsilon_\infty$, with $K = 5$ and i.i.d. $\mathcal U([0,1])$-distributed first components of the allele frequencies. The column \texttt{Reject $H_0$} reports the fraction of simulations for which the null hypothesis is rejected.}
        \label{tab:second}
    \end{minipage}
\end{table}


Since the performance of the test is not only influenced by $M$, $\varepsilon_\infty$ and $q_0$, but also by the distribution of the allele frequencies, we additionally evaluate the effect of the allele-frequency distribution for $K = 2$ and the thresholds $\varepsilon_\infty = 0.63$ and $\varepsilon_\infty = 0.75$. We compare the performance of the test across different values of $M$ and $T$ for the following two settings: $(i)$ $p_{k1m} \sim \mathcal U([0,1])$ (shown in Figures \ref{fig:pp_K2_1} and \ref{fig:pp_K2_2}) and $(ii)$ $p_{k1m} \sim \beta(0.5, 2)$ (shown in Figures \ref{fig:pp_K2_1_e} and \ref{fig:pp_K2_2_e}). We observe that the test has a slightly higher power in setting $(i)$ for all choices of $M$ and alternatives $T$. In setting $(i)$, the simulated allele frequencies are more variable across populations. This makes the ancestral populations more distinguishable and increases the information that is inferable about the individual admixture. In setting $(ii)$ the allele frequencies are more concentrated and skewed to smaller values. This makes the markers less informative. Thus, for setting $(i)$, the MLE can be estimated more precisely, especially for small $M$, than in setting $(ii)$.

\begin{figure}[h!]
    \centering
    \begin{minipage}{0.48\textwidth}
        \centering
        \includegraphics[width=\textwidth]{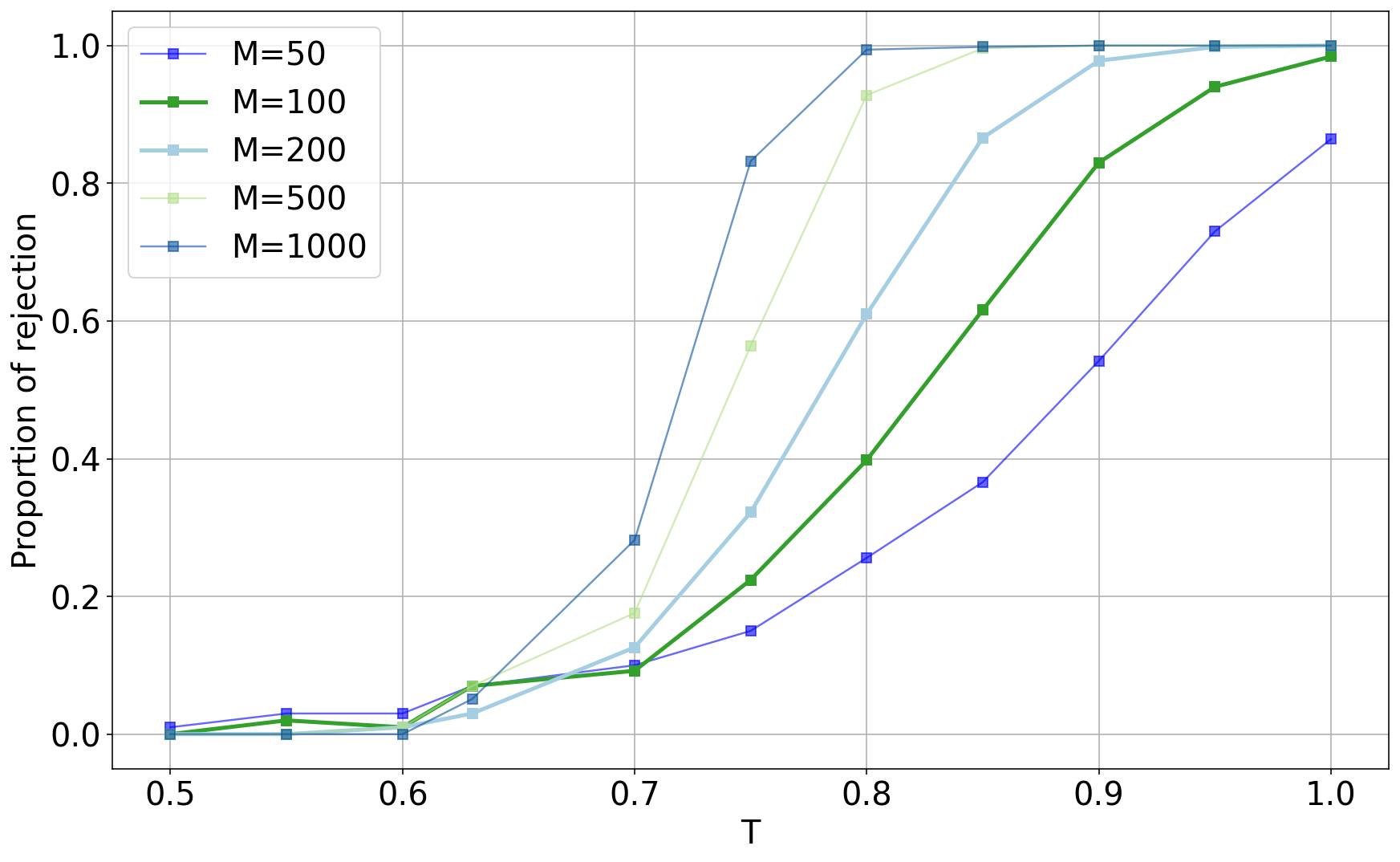}
        \caption{Fraction of simulations in which the null hypothesis is rejected for $K = 2$, $\varepsilon_\infty = 0.63$, and different values of $T$ and $M$. The first components of the allele frequencies are i.i.d. $\beta(0.5, 2)$-distributed.}
        \label{fig:pp_K2_1_e}
    \end{minipage}
    \hfill
    \begin{minipage}{0.48\textwidth}
        \centering
        \includegraphics[width=\textwidth]{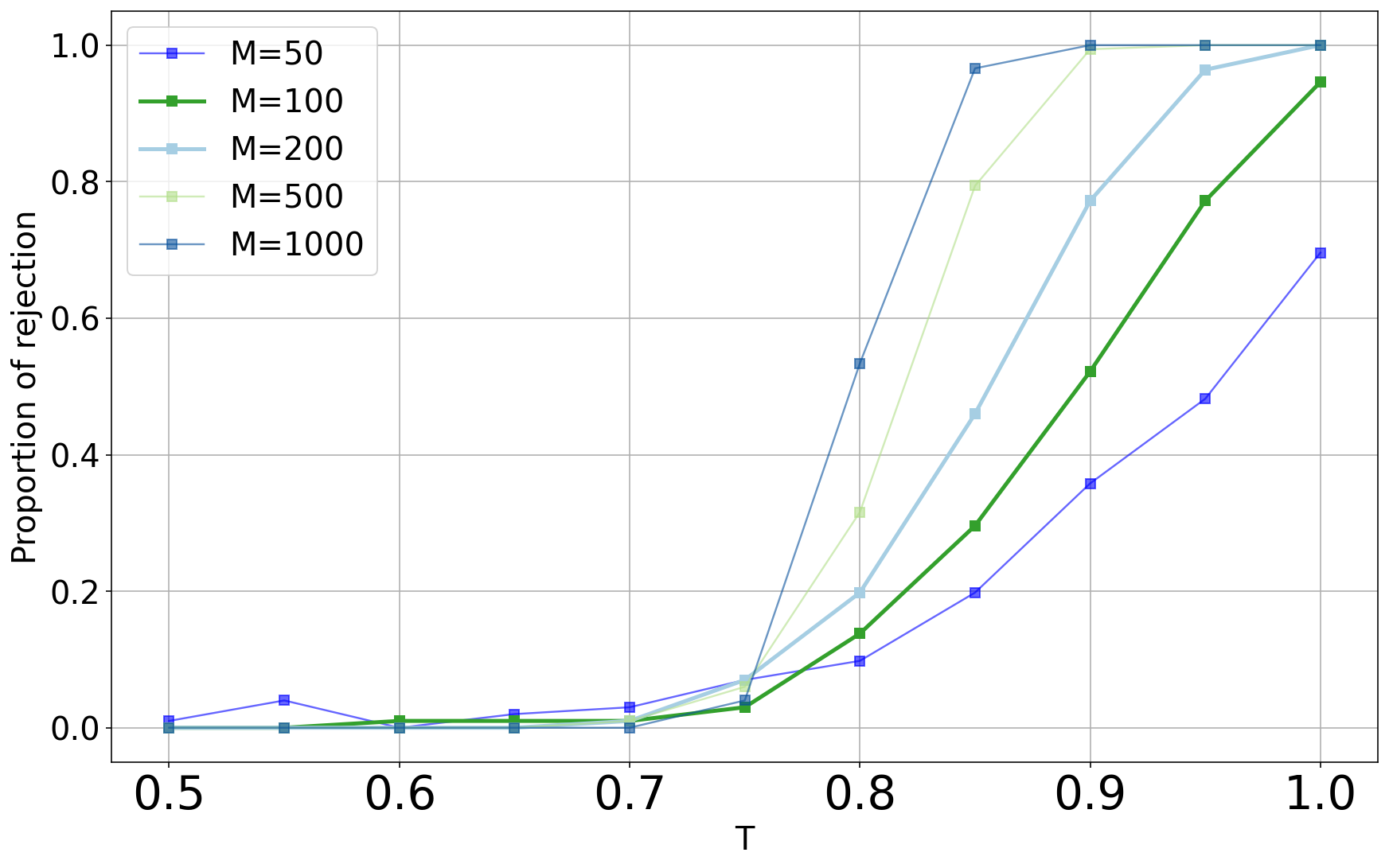}
        \caption{Fraction of simulations in which the null hypothesis is rejected for $K = 2$, $\varepsilon_\infty = 0.75$, and different values of $T$ and $M$. The first components of the allele frequencies are i.i.d. $\beta(0.5, 2)$-distributed.}
        \label{fig:pp_K2_2_e}
    \end{minipage}
\end{figure}

\section{Application of the Statistical Test to Data}\label{sec:application}

We applied the test to data from the 1000 Genomes Project \citep{10002015global}. For this purpose, we used the marker set of \cite{kidd2014}, which contains 55 bi-allelic markers, and included all individuals. We usd the five $1000$ Genomes superpopulations Europe (EUR), Africa (AFR), South Asia (SAS), East Asia (EAS), and Admixed Americans (AMR) as reference groups, so that $K = 5$.
We note that these reference groups should not be interpreted as truly non-admixed ancestral populations. In particular, AMR is itself an admixed reference group. The application is therefore intended as an illustration of the proposed testing procedure in a realistic reference-group setting. For each individual in the data set, we estimated the allele frequencies of the reference populations based on all other individuals in the data set. Thus, note that this application goes beyond the idealized theoretical setting of known allele frequencies. Based on these reference allele frequencies we applied the testing algorithm presented in Algorithm \ref{alg} for each individual with $B=100$ bootstrap repetitions and dominance threshold $\varepsilon_\infty = 0.75$. The number of individuals per population for which the null hypothesis in \eqref{test:swapped} is rejected, are shown in Figure \ref{fig:application}. 

\begin{figure}[h]
    \centering
    \includegraphics[width=0.75\linewidth]{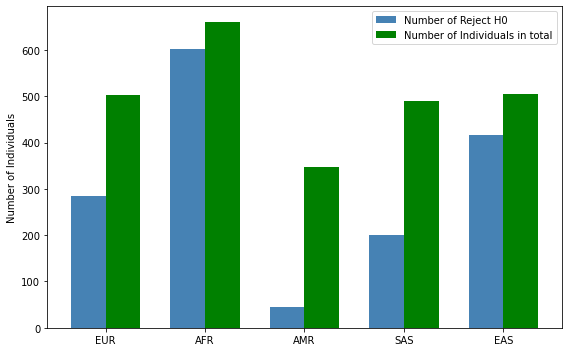}
    \caption{Number of individuals for which the null hypothesis in \eqref{test:swapped} is rejected for each population group (blue) and the total number of individuals in the population group (green). 
    The threshold is $\varepsilon_\infty = 0.75$.
    }

    \label{fig:application}
\end{figure}

Figure \ref{fig:application} shows substantial differences between populations. For individuals from AFR and EAS, the null hypothesis is usually rejected with rejection rates of around $91\%$ and $83\%$, respectively. For individuals from AMR, the rejection rate is small with around $13\%$. We also observe that, for individuals from EUR and SAS, the null hypothesis is rejected for approximately for half of the individuals with rejection rates of around $57\%$ and $41\%$, respectively. Intuitively, rejecting the null hypothesis of \eqref{test:swapped} means that the individual has a high percentage of its genome from one single reference population. Hence, it is not surprising that for example population AMR has a low fraction of individuals for which the null hypothesis is rejected. The AMR superpopulation is itself admixed and heterogeneous, so AMR individuals are less likely to be represented by one dominant reference component. In contrast, groups such as AFR and EAS are more strongly differentiated by the marker set and therefore more often yield estimated admixture vectors with one clearly dominant component.

Analogously to how we apply test \eqref{test:swapped} to the data from \cite{10002015global} above, we  also apply test \eqref{HypothesisAdmixtureOneComponent}. The results with $B=100$ bootstrap repetitions and threshold $\varepsilon_\infty^s = 0.9$ are shown in Figure \ref{fig:application1}. The threshold $\varepsilon_\infty^s = 0.9$ corresponds to evidence against a very strong single-population contribution. Thus, the null hypothesis is rejected for 222 individuals, corresponding to approximately $9\%$ of the data set. Again, there are substantial differences between populations. For example, the null hypothesis is rejected for approximately $33\%$ of individuals from AMR, whereas this is the case for only about $0.6\%$ of individuals from EAS. Again, this is expected because AMR is itself an admixed and heterogeneous reference group. In contrast, the rejection rate for indicates that most EAS individuals have one highly dominant reference component with respect to the chosen marker set and reference populations. Thus, both hypothesis tests shows similar patterns for the population groups.

\begin{figure}[h]
    \centering
    \includegraphics[width=0.75\linewidth]{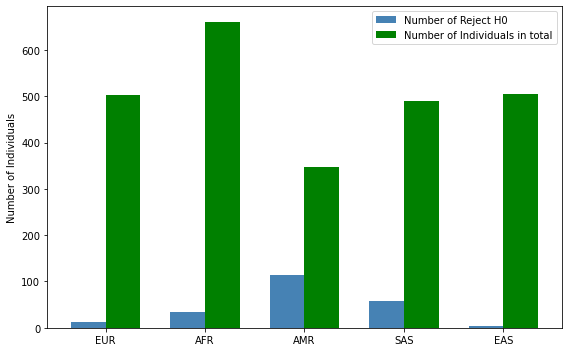}
    \caption{Number of individuals for which the null hypothesis in \eqref{HypothesisAdmixtureOneComponent} is rejected for each population group (blue) and the total number of individuals in the population group (green). 
    The threshold is $\varepsilon_\infty^s = 0.9$.
    }
    \label{fig:application1}
\end{figure}
\section{Discussion}

In this article, we develop a statistical test for deciding whether an ancestry population contributes significantly to a genetic trace, enabling classification of a trace as originating from a single population.
Unlike existing theory on individual-admixture estimation \citep{pfaffelhuber2022central, heinzel2025consistency}, our focus extends beyond estimation to inference. We provide a testing framework that allows statements about single-population ancestry to be made with controlled uncertainty.

We show that the suggested constrained bootstrap procedure is asymptotically valid under standard maximum-likelihood assumptions.
To isolate the core inferential problem, our theoretical analysis treats the ancestral allele frequencies as known. Extending the procedure to a supervised setting in which these frequencies are estimated from reference panels is an important direction for future work.

Although a Wald-type test could be constructed from the asymptotic normality of the MLE, it would require estimating the variance of the maximum ancestry component and would depend on a normal approximation. Because the number of markers may be limited and the null hypothesis involves a threshold condition with a least-favorable boundary case, we instead employ a constrained parametric bootstrap. This approach calibrates the test statistic under a null-constrained estimator and provides flexible, finite-sample calibration for the threshold test.

A key advantage of the proposed test is that it avoids the need to test each ancestry component separately. A component-wise strategy would naturally lead to a multiple-testing problem \citep{sainani2009problem}, requiring corrections such as Bonferroni adjustment \citep{vanderweele2019some}, which are often conservative and can reduce power.

From a mathematical standpoint, our results are noteworthy because they extend the theory of \cite{DetteMoellenhoff} to non-identically distributed random variables.
We expect that both our proof techniques and our implementation can be adapted to related hypothesis-testing problems involving independent but non-identically distributed random variables.

A further research direction is to develop analogous tests for the minimum rather than the maximum ancestry component. Such a procedure would be especially relevant in the unsupervised admixture model, where both allele frequencies and individual ancestry proportions are unknown. In software such as \textsc{Structure} \citep{pritchard2000} and \textsc{Admixture} \citep{alexander2009fast}, the user must specify the number of ancestral populations, $K$. Testing whether an additional ancestry component contributes meaningfully could provide a principled inferential approach to choosing K, a problem that has been widely discussed \citep{evanno2005, wang2019parsimony, pritchard2000, raj2014, verity2016, alexander2011} but remains challenging \citep{garcia2020evaluation}.

Finally, it would be valuable to investigate whether the theoretical framework extends to linkage models \citep{falush2003}. Such an extension would require substantially different techniques, as the independence assumption across markers would no longer hold.

\noindent
\textbf{Funding} \\
\noindent
This project received funding by the Deutsche Forschungsgemeinschaft (DFG, German Research Foundation) – Project-ID 499552394 – SFB Small Data. HD is supported by the European Union through the European Joint Programme on Rare Diseases under the European Union’s Horizon 2020 Research and Innovation Programme Grant Agreement Number 825575. He is task leader of RealiseD supported by the Innovative Health Initiative Joint Undertaking (IHI JU) under grant agreement No 101165 912. The JU
receives support from the European Union’s Horizon Europe research and innovation programme and COCIR, EFPIA, Europa Bío, MedTech Europe, and Vaccines Europe.

\noindent
\textbf{Declaration of Conflicts of Interests} \\
\noindent
The authors declare that there are no conflicts of interest.
\bibliography{literatur,CBB}

\end{document}